\documentclass[pdflatex,sn-mathphys-num]{sn-jnl}

\usepackage{graphicx}
\usepackage{amsmath,amssymb,amsfonts}
\usepackage{amsthm}
\usepackage{mathrsfs}
\usepackage[title]{appendix}
\usepackage{xcolor}
\usepackage{booktabs}
\usepackage{subcaption}
\usepackage{mdframed}
\let\orcidlogo\relax
\usepackage{orcidlink}

\theoremstyle{thmstyleone}
\newtheorem{theorem}{Theorem}
\newtheorem{lemma}{Lemma}
\newtheorem{proposition}{Proposition}
\newtheorem{observation}{Observation}

\begin{document}

\title[An FPTAS for two-machine open-shop scheduling]{An FPTAS for Two-Machine Open-Shop Scheduling with a Single Unavailability Interval}

\author[1]{\fnm{Hao} \sur{Lu}}
\author*[2]{\fnm{Yuan} \sur{Yuan}\,\orcidlink{0000-0001-5403-6576}}
\email{yuanyuan12430@126.com}
\author[3]{\fnm{Xingwu} \sur{Liu}}
\author[1]{\fnm{Xin} \sur{Han}\,\orcidlink{0000-0002-1694-7712}}
\author[1]{\fnm{Yong} \sur{Zhou}}

\affil*[1]{\orgdiv{School of Software},
  \orgname{Dalian University of Technology},
  \orgaddress{\city{Dalian}, \postcode{116620}, \state{Liaoning}, \country{China}}}

\affil[2]{\orgdiv{School of Information and Communication Engineering},
  \orgname{Dalian Minzu University},
  \orgaddress{\city{Dalian}, \postcode{116600}, \state{Liaoning}, \country{China}}}

\affil[3]{\orgdiv{School of Mathematical Sciences},
  \orgname{Dalian University of Technology},
  \orgaddress{\city{Dalian}, \postcode{116024}, \state{Liaoning}, \country{China}}}

\abstract{We consider the two-machine open-shop scheduling problem in which one machine is unavailable during a fixed interval. We study the resumable setting: an operation interrupted by the unavailability interval may resume, without penalty, when the machine becomes available. The objective is to minimize the makespan. Although the problem is NP-hard and several approximation algorithms are known, whether it admits a fully polynomial-time approximation scheme (FPTAS) has remained open for two decades. We resolve this question affirmatively by giving the first FPTAS, thereby strengthening the previously known polynomial-time approximation scheme (PTAS). As an intermediate result, we develop a new pseudo-polynomial dynamic program with seven state dimensions, improving on the ten-dimensional formulation in the literature.}

\keywords{Open-shop scheduling, Machine unavailability, Fully polynomial-time approximation scheme, Dynamic programming}

\maketitle

\section{Introduction}
	Machine-scheduling models with maintenance or availability constraints have received considerable attention in operations research. We study a two-machine open-shop problem in which one machine is subject to a maintenance period. This problem was introduced by Breit et al.~\cite{breit2001two}.
	
	Maintenance periods (MPs) are commonly classified as either \emph{fixed} or \emph{flexible}. In the fixed model, both endpoints of the MP are prescribed. In the flexible model, the MP has a fixed duration, but its starting time may be chosen within a given time window.
	
	An instance of the classical two-machine problem consists of a set $N=\{1,2,\ldots,n\}$ of jobs to be processed on machines $M_1$ and $M_2$. Each job $i\in N$ comprises two operations, $O_{i,1}$ and $O_{i,2}$, with processing times $a_i$ on $M_1$ and $b_i$ on $M_2$, respectively. Operations of the same job may not overlap, and each machine can process at most one operation at a time. Apart from an interruption caused by the MP in the resumable model defined below, operations are nonpreemptive. If $O_{i,1}$ must precede $O_{i,2}$ for every job $i$, the environment is a flow shop; if the order of the two operations may be chosen independently for each job, it is an open shop. The classical two-machine open-shop problem, denoted by $O2\Vert C_{\max}$, can be solved in $O(n)$ time by the algorithm of Gonzalez and Sahni~\cite{gonzalez1976open}; alternative linear-time algorithms appear in~\cite{pinedo1982stochastic,de1989graph}. The objective is to minimize the \emph{makespan}, that is, the latest completion time among all operations.
	
	We extend this classical problem by imposing a fixed MP on one of the machines. The MP occupies the interval $[s,e]$, during which the machine is unavailable.
	
	Three standard models describe how an operation interacts with a fixed MP. In the \emph{resumable} model, an operation interrupted by the MP resumes without penalty when the machine becomes available~\cite{lee1996machine}. In the \emph{non-resumable} model, such an operation must restart from the beginning~\cite{lee1996machine}. In the \emph{semi-resumable} model, it must repeat part of its processing after the MP~\cite{lee1999two}.
	
	The two-machine open-shop problem with a single fixed MP is NP-hard~\cite{breit2001two}, which motivates the study of approximation algorithms. An algorithm is a $\rho$-approximation, for $\rho\geq 1$, if it returns a schedule whose makespan is at most $\rho$ times the optimum. A polynomial-time approximation scheme (PTAS) achieves ratio $1+\varepsilon$ for every fixed $\varepsilon>0$ in time polynomial in the input size. If the running time is also polynomial in $1/\varepsilon$, the scheme is a fully polynomial-time approximation scheme (FPTAS).
	
	Both exact and approximation algorithms are known for two-machine open-shop problems with fixed MPs. In the resumable setting, Breit et al.~\cite{breit2001two} gave a $4/3$-approximation algorithm for a single fixed MP. Kubzin et al.~\cite{kubzin2006polynomial} subsequently developed two PTASs: one for a single fixed MP on each machine and one for multiple fixed MPs on one machine. Lorigeon et al.~\cite{lorigeon2002dynamic} gave a pseudo-polynomial dynamic program for a single fixed MP. In the non-resumable setting, Breit et al.~\cite{breit2003non} obtained a $4/3$-approximation algorithm, and Yuan et al.~\cite{yuan2021ptas} later developed a PTAS.
	
	The corresponding two-machine flow-shop problem with a single fixed MP is NP-hard in the resumable setting~\cite{lee1997minimizing}. Lee~\cite{lee1997minimizing} gave approximation algorithms with ratios $3/2$ and $4/3$ when the MP lies on $M_1$ and $M_2$, respectively, and Ng and Kovalyov~\cite{ng2004fptas} subsequently developed an FPTAS. For the non-resumable setting with a single fixed MP on $M_1$, Lu et al.~\cite{lu2026flowshop} recently presented the first FPTAS. The structural similarity between the flow-shop and open-shop variants, together with the pseudo-polynomial algorithm of Lorigeon et al.~\cite{lorigeon2002dynamic}, suggests that the open-shop variant may also admit an FPTAS.
	
	We confirm this conjecture. For the resumable problem, denoted by $O2\mid r\text{-}a(M_1)\mid C_{\max}$~\cite{graham1979optimization}, we decompose a candidate schedule into three two-machine flow-shop subproblems. Their state information suffices to recover the makespan of the original instance. This decomposition yields a seven-dimensional pseudo-polynomial dynamic program with running time $O(nsV^5)$, where $V$ is the total processing time of all jobs. Unlike the earlier formulation, our dynamic program minimizes $a(XP)$ evaluated using the original processing times; retaining this value after rounding preserves feasibility with respect to $s$. Applying standard scaling and rounding then yields the first FPTAS for the problem, with running time $O(n^7/\varepsilon^6)$.
	
	The remainder of the paper is organized as follows. Section~\ref{Pre} establishes the structural properties used throughout the paper. Section~\ref{DP} presents the pseudo-polynomial dynamic program, and Section~\ref{FPTAS} derives the FPTAS. Section~\ref{Conclusion} concludes the paper.
	
	\section{Preliminaries}\label{Pre}
	
	We first introduce notation and establish structural properties of optimal schedules for $O2\mid r\text{-}a(M_1)\mid C_{\max}$. In particular, the jobs can be partitioned into four subsets according to their operation order and their position relative to the MP on $M_1$. This structure underlies our dynamic program.
	
	\noindent\textbf{Johnson's rule (JR).}
	Job $J_i$ precedes job $J_j$ if $\min\{a_i,b_j\}<\min\{a_j,b_i\}$; ties are broken arbitrarily~\cite{johnson1954optimal}. For the two-machine flow-shop problem, JR produces an optimal permutation schedule.
	
	Let $N=\{1,2,\ldots,n\}$ denote the job set. For each job $i\in N$, let $a_i$ and $b_i$ be its processing times on $M_1$ and $M_2$, respectively. For any $Q\subseteq N$, define $a(Q)=\sum_{i\in Q}a_i$ and $b(Q)=\sum_{i\in Q}b_i$. Let $C_{\max}(\pi)$ denote the makespan of a schedule $\pi$. If $\widehat{S}^*$ is an optimal schedule for $O2\Vert C_{\max}$, then~\cite{gonzalez1976open}
	\begin{equation}
		C_{\max}(\widehat{S}^*) = \max\left\{a(N),\ b(N),\ \max_{i\in N}\{a_i+b_i\}\right\}.
	\end{equation}
	
	The MP is the interval $[s,e]$, of duration $d=e-s$. Without loss of generality, it is imposed on $M_1$, since the case in which it is imposed on $M_2$ is symmetric. We assume throughout that $a_i$, $b_i$, $s$, and $d$ are positive integers encoded in binary. We restrict attention to $0<s<a(N)$. The boundary case $s=0$ is solvable in $O(n)$ time by the algorithm of Lu and Zhang~\cite{lu1993np}, whereas the case $s\geq a(N)$ is solvable in $O(n)$ time by the algorithm of Shakhlevich et al.~\cite{shakhlevich1993two}.
	
	Let $X$ be the set of jobs processed first on $M_1$ and then on $M_2$, and let $Y$ be the set processed in the reverse order. Let $t_{i,j}$ denote the starting time of job $i$ on machine $M_j$. We define
	
	\begin{align*}
		XP &= \{i \in X \mid t_{i,1} + a_i \leq s\}, \quad XS = \{i \in X \mid t_{i,1} + a_i > s\}, \\
		YP &= \{i \in Y \mid t_{i,1} < s\}, \quad YS = \{i \in Y \mid t_{i,1} \geq s\}.
	\end{align*}

	\Needspace{10\baselineskip}
	The following lemma identifies the two canonical forms on which our algorithm is based.
	\begin{lemma}\label{lem:canonical}
		There exists an optimal schedule of one of the following two forms:
		\begin{enumerate}
			\item On \( M_1 \), the job sequence is \( XP / XS / YS \); on \( M_2 \), the job sequence is \( YS / XP / XS \) (see Figure \ref{fig1:suba});
			\item On \( M_1 \), the job sequence is \( XP / YP / YS \); on \( M_2 \), the job sequence is \( YP / YS / XP \) (see Figure \ref{fig1:subb}).
		\end{enumerate}
		In either form, the jobs within each subset appear in the same order on both machines.
	\end{lemma}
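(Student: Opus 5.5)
We start from an arbitrary optimal schedule $S^*$ and normalize it by exchange arguments that never increase the makespan. The classification of jobs into $XP, XS, YP, YS$ depends only on the operation order and on how each $M_1$-operation sits relative to $s$. We will show that the sets can be rearranged into one of the two block patterns, and that inside each block the common order can be enforced.

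\textbf{Step 1: ordering on $M_1$.} On $M_1$, every operation finishing by $s$ lies before the MP and every operation starting at or after $s$ lies after it. By definition, $XP$ jobs finish on $M_1$ by $s$ and $YS$ jobs start on $M_1$ at or after $s$. An $XS$ or $YP$ job is either the single operation straddling $s$ (resumed after $e$), or lies entirely on one side of the MP.

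First we show that we may assume $XS=\emptyset$ or $YP=\emptyset$. Suppose both are nonempty. Compare the total $M_1$-load placed before $s$ for $Y$-jobs with that for $X$-jobs. Consider the two moves:
\begin{enumerate}
\item push all $YP$ jobs after $s$, turning them into $YS$;
\item pull $XS$ jobs before $s$, turning them into $XP$.
\end{enumerate}
Either move can be paired with a compensating shift of the other set so that no completion time on $M_1$ increases. Then we repair $M_2$ by moving $X$-jobs to the end of $M_2$ and $Y$-jobs to the front of $M_2$. This is the step I expect to be the main obstacle. I would first try the dichotomy on whether the straddling operation belongs to an $X$-job or a $Y$-job. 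If it belongs to an $X$-job, we try to make every $YP$ job an $YS$ job by swapping it with $X$-jobs after it on $M_1$, which yields form 1. If it belongs to a $Y$-job, we symmetrically obtain form 2. In each case the swap exchanges positions of contiguous $M_1$-operations, and we check that the $M_2$ operations of the moved jobs remain feasible.

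Given the dichotomy, the $M_1$ order is $XP/XS/YS$ or $XP/YP/YS$, obtained by sorting blocks on each side of the MP. Placing $X$-jobs earlier on $M_1$ only helps them, since their $M_2$ operations come later. Placing $Y$-jobs later on $M_1$ only helps them, since their $M_2$ operations come first, provided the set of operations before $s$ is unchanged, so the MP position is unaffected.

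\textbf{Step 2: ordering on $M_2$.} On $M_2$, $Y$-jobs must precede their $M_1$ operations and $X$-jobs must follow theirs. In form 1, $YS$ jobs start on $M_1$ after $s$, so their $M_2$ operations can be shifted to the front of $M_2$ as a block. $XP$ jobs are completed on $M_1$ earlier than $XS$ jobs, so on $M_2$ we order $YS/XP/XS$. A standard left-shift argument, keeping $M_2$ busy as much as possible, shows the makespan does not grow. Form 2 is analogous: $YP/YS$ come first on $M_2$ and $XP$ last. Here we note that $XP$ jobs are available on $M_2$ from $a(XP)\le s$, and that the $M_2$ block $YP\cup YS$ followed by $XP$ is feasible after suitable idle time.

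\textbf{Step 3: identical order within subsets.} Within each subset the jobs form a two-machine flow shop (in the order $M_1\to M_2$ or $M_2\to M_1$). The classical argument that a flow shop with two machines has an optimal permutation schedule applies, because adjacent interchanges that make the $M_2$ order match the $M_1$ order do not increase any block completion time. The MP affects only $M_1$, whose block order is unchanged. We may then take both machines in the same (e.g.\ Johnson) order inside each subset.
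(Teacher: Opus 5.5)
\textbf{Gap in Step~1.} Step~1 is the heart of the lemma, and it is not proved. The ``compensating shift'' is never specified or shown to be makespan-neutral, and you flag this yourself.

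More seriously, the dichotomy you propose is not valid: form~1 if the operation straddling the MP belongs to an $X$-job, form~2 if it belongs to a $Y$-job. It also ignores schedules in which no operation straddles the MP. For a counterexample, take $s=40$, $d=20$ and jobs $(a_i,b_i)=(10,50),(20,10),(14,5),(5,4)$. Consider the schedule with
$M_1$: $J_1$ on $[0,10]$, $J_2$ on $[10,30]$, $J_3$ on $[30,40]\cup[60,64]$, $J_4$ on $[64,69]$;
$M_2$: $J_2$ on $[0,10]$, $J_1$ on $[10,60]$, $J_4$ on $[60,64]$, $J_3$ on $[64,69]$.
Its makespan is $69=a(N)+d=b(N)$, so it is optimal. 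Its straddling operation belongs to the $X$-job $J_3$, and $YP=\{J_2\}$.

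Yet no optimal schedule has form~1. Both machines must be idle-free, so $[0,s]$ on $M_1$ would have to be filled by $X$-jobs. That forces $a(X)\ge 40$, hence $\{J_1,J_2,J_3\}\subseteq X$. Then $Y\subseteq\{J_4\}$ offers at most $4$ units of work to $M_2$, while no $X$-operation can reach $M_2$ before time $5$, so $M_2$ idles. Your proposed swap fails exactly here. Exchanging $J_2$ with $J_3$ merely makes $J_3$ finish before $s$ and $J_2$ the new straddler, still in $YP$. Pushing $J_2$ entirely past the MP would leave idle time on $M_1$ before $s$.

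\textbf{Missing idea.} Your own monotonicity remark (``$X$-jobs earlier, $Y$-jobs later on $M_1$ only helps'') works across the MP, so your proviso that the set of operations before $s$ stay unchanged is unnecessary. Suppose a $Y$-operation is immediately followed on $M_1$ by an $X$-operation, and the pair spans $[t,T]$, possibly containing the MP. Process the $X$-operation from $t$ and the $Y$-operation so that it ends at $T$. By resumability both fit; the $X$-job completes on $M_1$ no later and the $Y$-job starts on $M_1$ no earlier. So their $M_2$-operations remain feasible and nothing else moves. The mirror-image swap on $M_2$ puts $Y$ before $X$ there.

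After sorting, the form is read off rather than predicted. If the last $X$-operation on $M_1$ ends after $s$, every $Y$-job starts after $s$ and $YP=\emptyset$. Otherwise every $X$-job finishes by $s$ and $XS=\emptyset$. In the example this yields form~2 with $J_3\in XP$. This is what the paper's Claims~1 and~2 do: they interchange $XS$ and $YP$ jobs until one of the two sets is exhausted, without deciding in advance which one.

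The same swap also replaces the ``suitable idle time'' in your Step~2; no idle time needs to be inserted. Your Step~3 is fine, with one exception. For form~2 you also need $YP$ before $YS$ on $M_2$. That requires a common order across all of $Y$ (or one more swap of the same kind), not only within each subset.
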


	\begin{figure}[ht]
		\centering
		\begin{subfigure}[b]{0.45\textwidth}
			\centering
			\includegraphics[width=\textwidth]{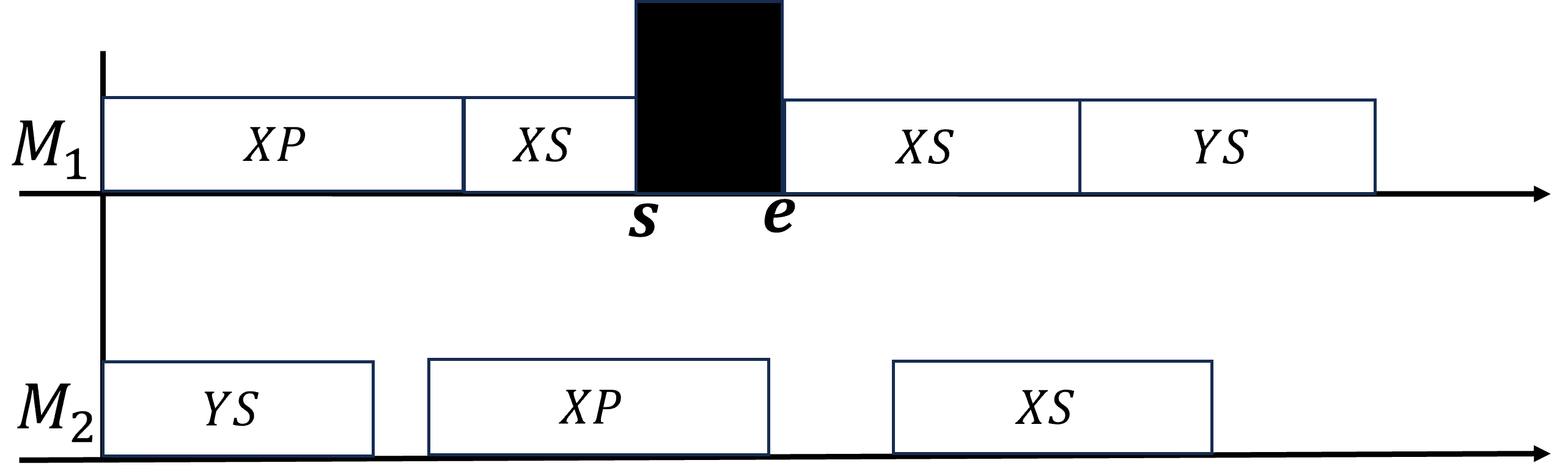}
			\caption{Form 1}
			\label{fig1:suba}
		\end{subfigure}
		\hfill
		\begin{subfigure}[b]{0.45\textwidth}
			\centering
			\includegraphics[width=\textwidth]{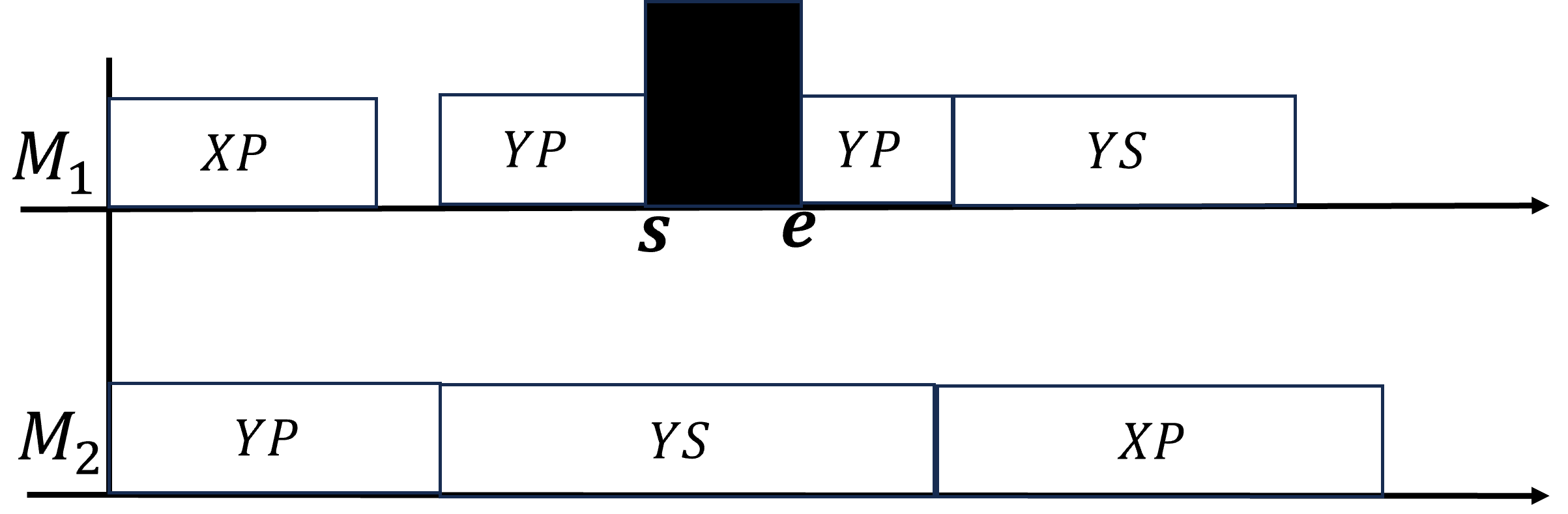}
			\caption{Form 2}
			\label{fig1:subb}
		\end{subfigure}
		\caption{Two canonical forms of an optimal schedule.}
		\label{fig1:main}
	\end{figure}

	\begin{proof}
		We establish the result through two exchange claims.

		\noindent\textit{Claim 1. There exists an optimal schedule in which at least one of $YP$ and $XS$ is empty.}

		Suppose that both $YP$ and $XS$ are nonempty, as in Figure~\ref{fig15}. Repeatedly interchange jobs in $XS$ and $YP$ until either every job formerly in $YP$ belongs to $YS$ or every job formerly in $XS$ belongs to $XP$. These two outcomes are illustrated in Figure~\ref{fig16}. The interchanges do not increase the makespan, so one of the resulting schedules is optimal and has $YP=\emptyset$ or $XS=\emptyset$.
		\begin{figure}[ht]
			\centering
			\includegraphics[width=0.6\textwidth]{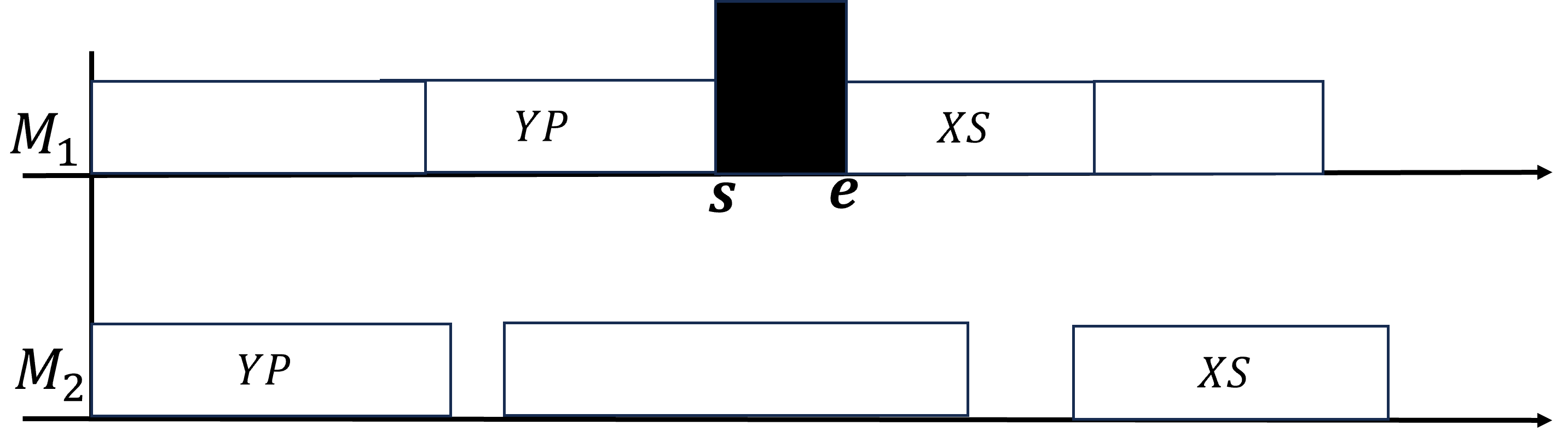}
			\caption{A schedule in which both $YP$ and $XS$ are nonempty.}
			\label{fig15}
		\end{figure}
		\begin{figure}[ht]
			\centering
			\begin{subfigure}[b]{0.45\textwidth}
				\centering
				\includegraphics[width=\textwidth]{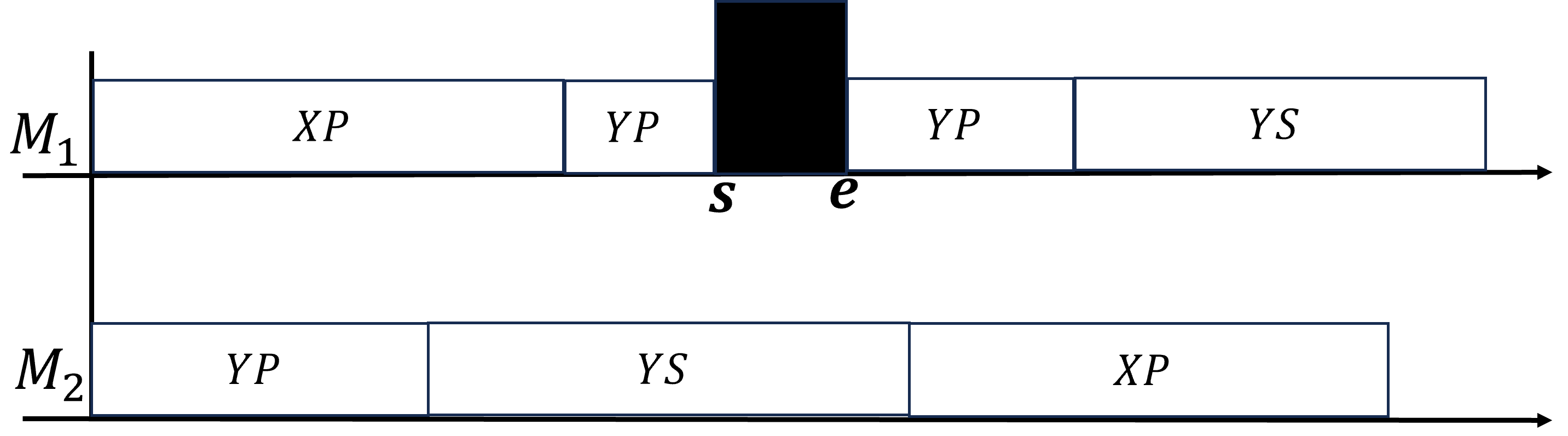}
				\caption{All jobs in $XS$ are moved to $XP$.}
				\label{fig16:suba}
			\end{subfigure}
			\hfill
			\begin{subfigure}[b]{0.45\textwidth}
				\centering
				\includegraphics[width=\textwidth]{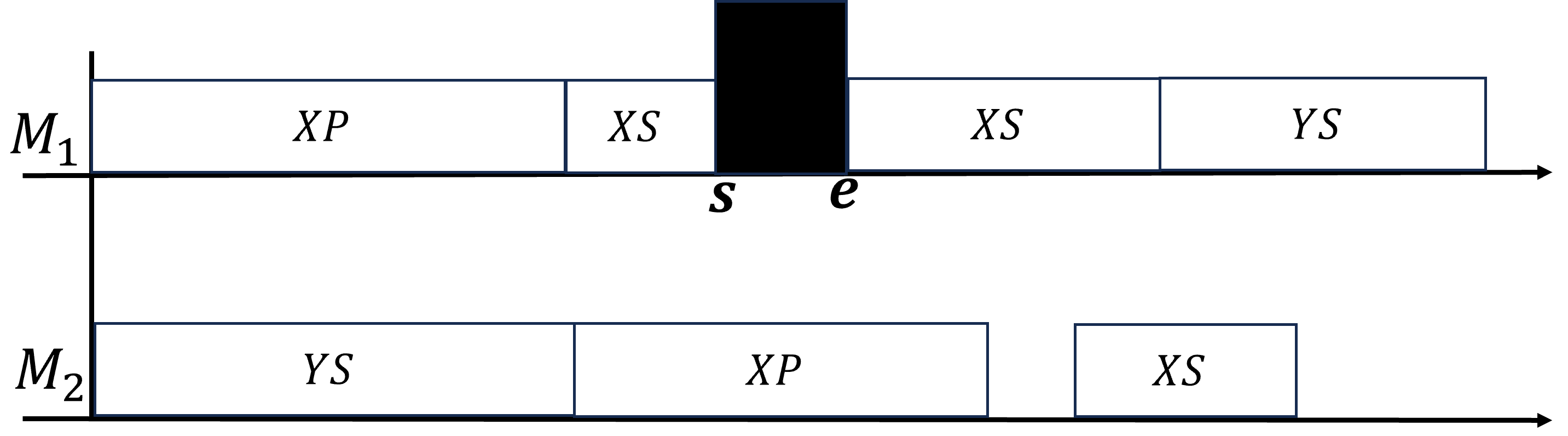}
				\caption{All jobs in $YP$ are moved to $YS$.}
				\label{fig16:subb}
			\end{subfigure}
			\caption{Interchanges between jobs in $XS$ and $YP$.}
			\label{fig16}
		\end{figure}

		\medskip
		\noindent\textit{Claim 2. There exists an optimal schedule in which the subset order is $XP/YP/XS/YS$ on $M_1$ and $YP/YS/XP/XS$ on $M_2$. Moreover, within each subset, the jobs appear in the same order on both machines.}

		Consider jobs $J_{yp}\in YP$ and $J_{xp}\in XP$ such that $J_{yp}$ precedes $J_{xp}$, as in Figure~\ref{fig17:suba}. Interchanging these jobs does not increase the makespan (Figure~\ref{fig17:subb}). The same argument applies to jobs $J_{ys}\in YS$ and $J_{xs}\in XS$ whenever $J_{ys}$ precedes $J_{xs}$. Repeating these interchanges yields the subset order $XP/YP/XS/YS$ on $M_1$. A symmetric argument gives the order $YP/YS/XP/XS$ on $M_2$. Within each subset, the same pairwise-interchange argument makes the job orders on the two machines identical.
	\begin{figure}[ht]
		\centering
		\includegraphics[width=0.6\textwidth]{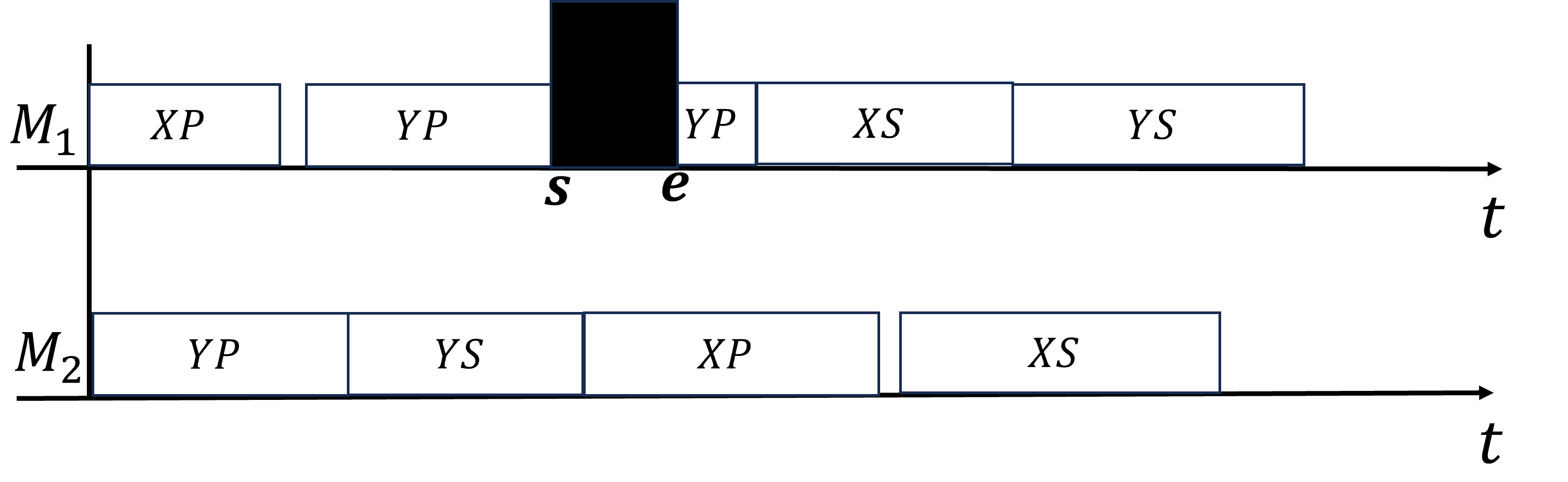}
		\caption{A dominant schedule for $O2\mid r\text{-}a(M_1)\mid C_{\max}$.}
		\label{fig:p1}
	\end{figure}
	\begin{figure}[ht]
		\centering
		\begin{subfigure}[b]{0.45\textwidth}
			\centering
			\includegraphics[width=\textwidth]{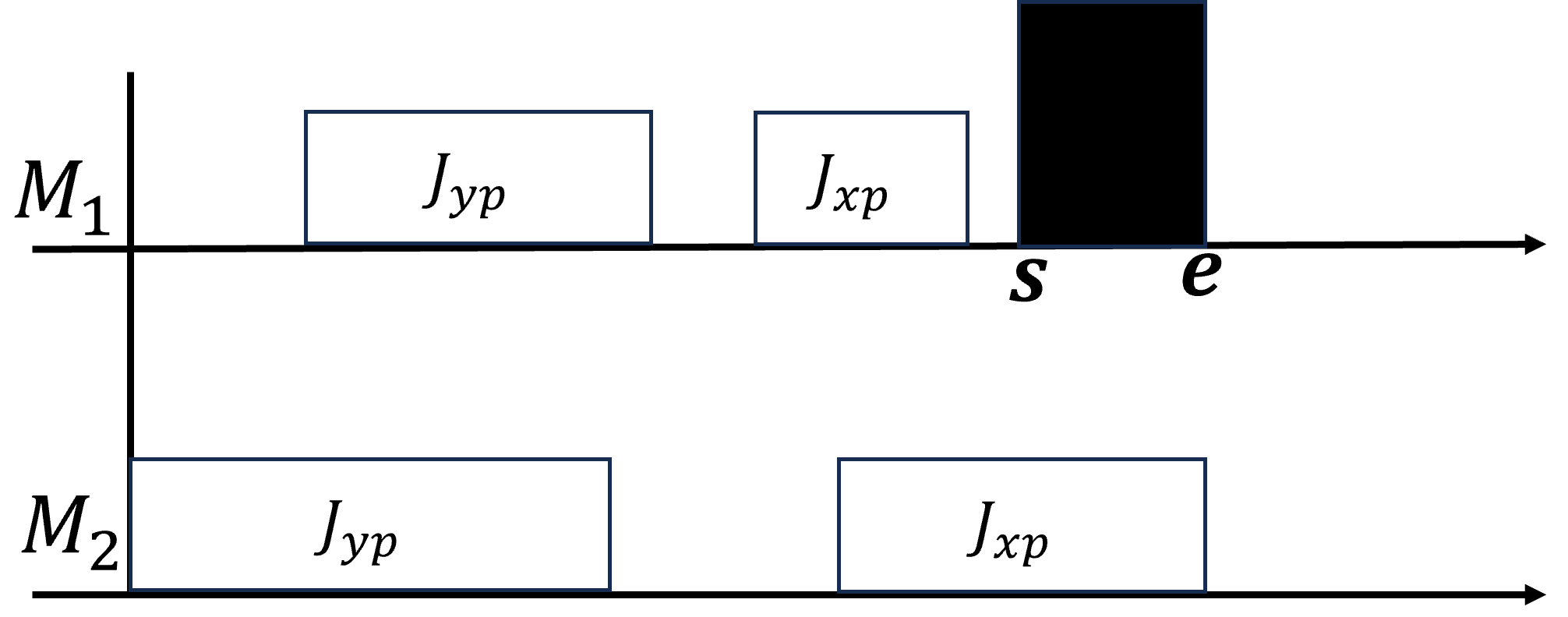}
			\caption{$J_{yp}$ precedes $J_{xp}$.}
			\label{fig17:suba}
		\end{subfigure}
		\hfill
		\begin{subfigure}[b]{0.45\textwidth}
			\centering
			\includegraphics[width=\textwidth]{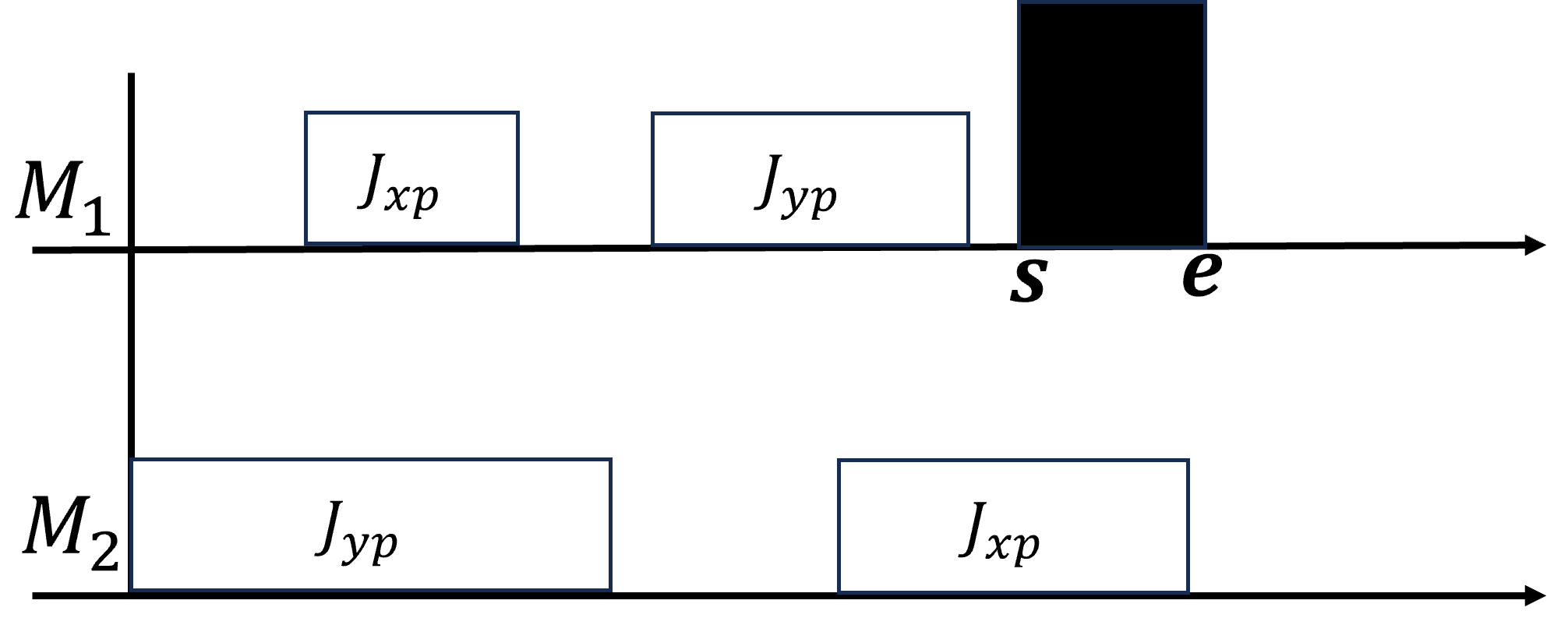}
			\caption{The schedule after interchanging $J_{yp}$ and $J_{xp}$.}
			\label{fig17:subb}
		\end{subfigure}
		\caption{The interchange used in the proof of Claim~2.}
		\label{fig17}
	\end{figure}

		By Claim~1, we may restrict attention to schedules with $YP=\emptyset$ or $XS=\emptyset$. Applying Claim~2 to these two cases gives, respectively, the two forms stated in the lemma.
	\end{proof}

The same structural claim appears in~\cite{lorigeon2002dynamic}, but with a different partition. Here, membership in $YP$ and $YS$ is determined by the starting time on $M_1$, whereas Lorigeon et al.~classify these jobs by their starting time on $M_2$; specifically, they define
	\begin{align*}
		YP &= \{i \in Y \mid t_{i,2} < s\}, \quad YS = \{i \in Y \mid t_{i,2} \geq s\}.
	\end{align*}
	
	This distinction is essential: Lemma~\ref{lem:canonical} does not hold under that alternative definition. The two-job instance in Figure~\ref{fig14:main} has a unique optimal schedule. Its job order is $J_1,J_2$ (that is, $XS/YS$) on $M_1$ and $J_2,J_1$ (that is, $YS/XS$) on $M_2$. Under the classification in~\cite{lorigeon2002dynamic}, however, these orders would be $XS/YP$ on $M_1$ and $YP/XS$ on $M_2$.

\begin{figure}[ht]
		\centering
		\includegraphics[width=0.6\textwidth]{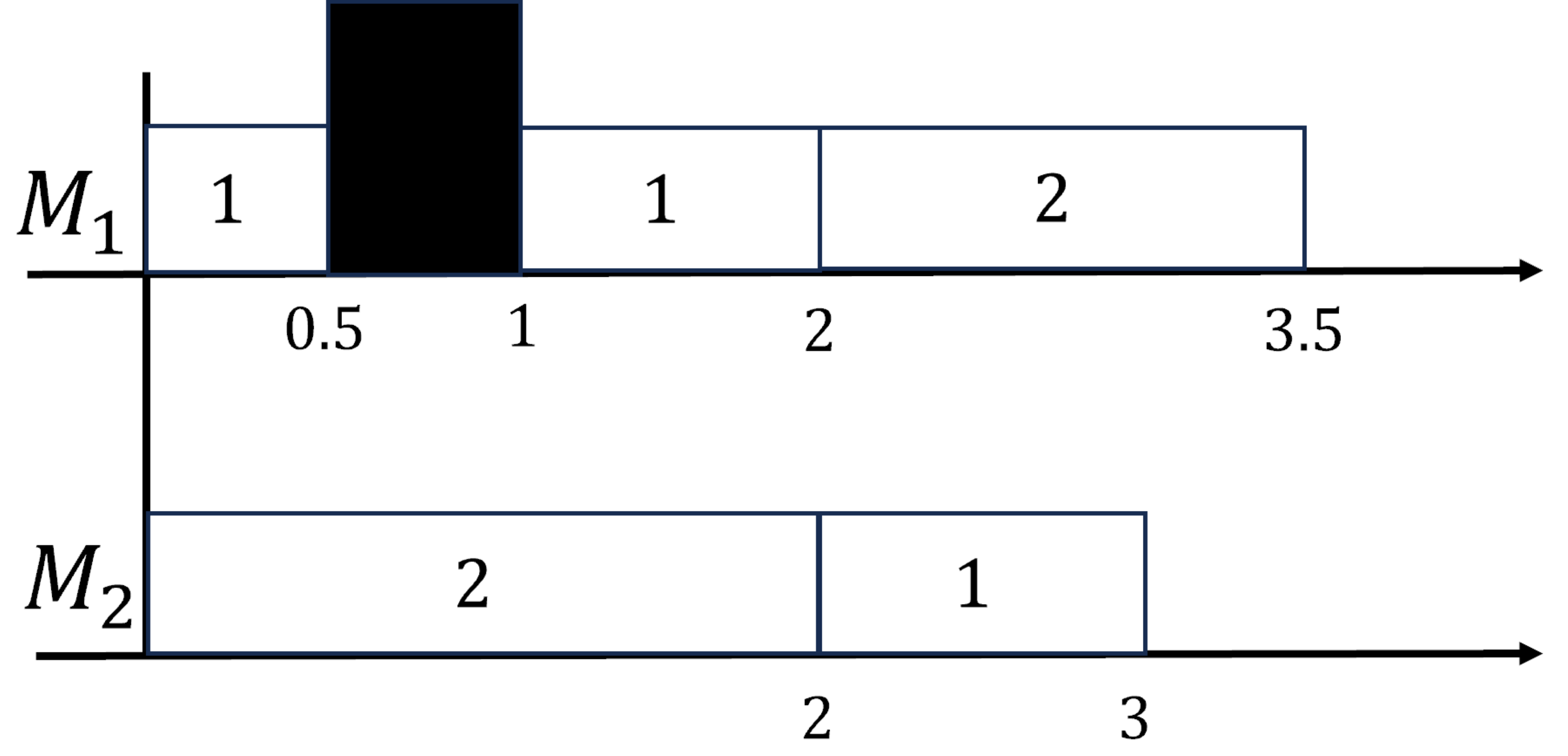}
		\caption{$a_1 = 1.5, b_1 = 1, a_2 = 1.5, b_2 = 2, s = 0.5, e = 1.$}
		\label{fig14:main}
	\end{figure}
	
	The following ordering property completes the structural description needed by the dynamic program.
	\begin{lemma}\label{lem:johnson-orders}
		An optimal canonical schedule can be chosen so that the jobs in $XP$ and $XS$ follow JR, whereas those in $YP$ and $YS$ follow the reverse of JR.
	\end{lemma}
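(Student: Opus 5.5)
Starting from an optimal canonical schedule given by Lemma~\ref{lem:canonical}, we will reorder the jobs inside each of the four subsets one subset at a time. The subset partition, the subset order on each machine, and the start and end of each subset's block on both machines stay fixed, and the makespan must not increase. The tool is the classical adjacent pairwise-interchange argument behind Johnson's rule, applied to the flow-shop subproblem that a subset induces once the subset's time window is fixed. Within a subset the jobs already appear in the same order on both machines, so each subset is processed as a permutation flow shop. For $XP$ and $XS$ the route is $M_1\to M_2$. For $YP$ and $YS$ the route is $M_2\to M_1$, so Johnson's rule with the roles of $a$ and $b$ swapped, which is exactly the reverse of JR, is the optimal order there.

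\textbf{Decoupling.} Fix a subset $Q$ and freeze every operation outside $Q$. On each machine the operations of $Q$ then live in a window bounded below by the completion of the preceding block (or of earlier operations of the same jobs on the other machine), and they only need to finish early enough for the later blocks to remain feasible. Since the order of $Q$ on the other machine is fixed, the schedule of $Q$ is a two-machine flow shop with machine-available times. We will check that the quantity to minimise is the completion time of $Q$ on its second machine, and that reducing it (together with not delaying $Q$ on its first machine) keeps the frozen remainder feasible. This holds because on $M_1$ and $M_2$ the later blocks depend on $Q$ only through the finishing times of $Q$ on each machine. The first-machine finishing time equals the window start plus the total work, so it is order independent, except possibly when the MP is involved.

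\textbf{Handling the MP.} This is the main obstacle.
\begin{itemize}
\item $XP$ finishes on $M_1$ before $s$, and $YS$ starts on $M_1$ after $s$, so the MP does not touch these two subsets and the standard Johnson argument applies directly.
\item For $XS$ in form 1 and $YP$ in form 2, one operation on $M_1$ may straddle $[s,e]$. In the resumable model the effect on $M_1$ is a shift of every time after $s$ by $d$.
\end{itemize}
In the second case we argue that $M_1$ still behaves like a machine whose processing is a nondecreasing function of the cumulative work. The adjacent interchange of jobs $i,j$ with $\min\{a_i,b_j\}\le\min\{a_j,b_i\}$ then still does not increase the completion times: the usual inequality $\max\{\cdot\}$ comparison survives a monotone time transformation. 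We must also confirm that the swap keeps each job in its class. For $XS$ this means the completion on $M_1$ stays after $s$, which holds since the last $XS$ job already ends after $e$ and the block occupies a fixed interval. For $YP$ the jobs must start on $M_1$ before $s$. If an interchange would push the start of a $YP$ job past $s$, we reclassify that job as $YS$; its start is then $\ge s$ with the same ordering, so it remains consistent with form 2 after regrouping. Alternatively, we show that the block's $M_1$ start is unchanged so the first job's classification holds.

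\textbf{Conclusion.} Repeating the adjacent interchanges sorts each subset into JR (for $X$-subsets) or reverse JR (for $Y$-subsets) order, with ties broken consistently, and the makespan does not increase at any step. Hence an optimal canonical schedule with the stated orders exists.
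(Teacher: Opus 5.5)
Your strategy is the paper's: adjacent Johnson interchanges inside each subset, with the blocks kept in place. The paper simply asserts that each subset is an ordinary flow shop and that the boundaries survive. You are right that the MP needs an argument, but the one you give for the step you call the main obstacle is false. Johnson's interchange inequality does \emph{not} survive a monotone time transformation, and the resumable MP is itself a counterexample. Let $M_1$ reach work level $W$ at time $\varphi(W)=W$ for $W\le s$ and $\varphi(W)=W+d$ for $W>s$. Take $s=5/2$, $d=10$, $a_i=1$, $b_i=3/2$, $a_j=2$, $b_j=5$, with both machines free at time $0$. Then $\min\{a_i,b_j\}=1<3/2=\min\{a_j,b_i\}$, yet the order $i,j$ finishes on $M_2$ at $18$ and the order $j,i$ at $29/2$. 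The reason is that $\varphi(a_j+a_i)-\varphi(a_j)=a_i+d$ exceeds $b_i$. Here $j$ would finish by $s$, so it would lie in $XP$, not $XS$; the class structure, which your argument never uses, is exactly what excludes this. In $XS$ every job finishes on $M_1$ after $s$, so an adjacent pair $(j,i)$ starting at work level $W$ has $W+a_j>s$. Hence $\varphi(W+a_j)$ and $\varphi(W+a_i+a_j)$ are the unperturbed values plus $d$, while $\varphi(W+a_i)\le W+a_i+d$, and the comparison reduces to Johnson's inequality translated by $d$. In $YP$ only the last job can straddle $[s,e]$, so only the swap of the last two jobs needs a short case analysis. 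For instance, suppose the job $i$ moved forward straddles in both orders, and let $T_1,T_2$ be the times at which $M_1,M_2$ become free for the pair. Then one needs $\max\{T_1,T_2+b_i\}+a_j\le\max\{T_1+a_j,\,T_2+b_j+\max\{a_j,b_i\}\}$, which holds because the reverse-JR condition gives $\min\{a_j,b_i\}\le b_j$.

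Your class-preservation claims also fail, and not only for $YP$. A fixed $XS$ block does not keep each job finishing after $s$: if $W+a_i\le s<W+a_j$, moving $i$ to the front turns it into an $XP$ job. In form~2, your claim that the MP never touches $YS$ is not invariant under interchanges. Take $YP=\{p\}$ with $a_p=b_p=1$, $s=3$, and $YS=(j,i)$ with $b_j=3$, $a_j=1$, $b_i=1/2$, $a_i=2$. Reverse JR swaps $i$ and $j$, after which $i$ starts on $M_1$ at time $2<s$, is interrupted by the MP, and becomes a $YP$ job. Your alternative for $YP$ misses the point, since the job at risk is the last $YP$ job, not the first. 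These relabelings are harmless for the canonical form: the affected job sits at the $XP/XS$ or $YP/YS$ boundary on both machines, and the schedule itself does not change. However, a relabeled job can break the order of the subset it joins, and relabeling between $YP$ and $YS$ occurs in both directions, so ``repeat the interchanges'' needs a termination measure. One that works is the number of inversions of the whole $X$-sequence and $Y$-sequence relative to a fixed tie-broken (reverse) Johnson order: each swap lowers it by one, and relabeling leaves it unchanged.
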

	\begin{proof}
		By Lemma~\ref{lem:canonical}, we may start from an optimal schedule in which the job order is identical on both machines within each subset. Once the subset boundaries are fixed, $XP$ and $XS$ each form an ordinary two-machine flow-shop subproblem, so Johnson's pairwise-interchange argument gives JR without increasing the makespan. For $YP$ and $YS$, the machine order is reversed; applying the same argument after exchanging the roles of $M_1$ and $M_2$ gives the reverse of JR. These interchanges preserve the subset boundaries and hence preserve the corresponding canonical form.
	\end{proof}
	
	\section{A pseudo-polynomial dynamic program}\label{DP}
	
	We develop a pseudo-polynomial dynamic program for $O2\mid r\text{-}a(M_1)\mid C_{\max}$ with running time $O(nsV^5)$, where $V=a(N)+b(N)$. The earlier dynamic program of Lorigeon et al.~\cite{lorigeon2002dynamic} uses ten state dimensions and runs in $O(ns^2V^7)$ time, whereas our formulation uses seven dimensions. A second important distinction is the objective stored by the dynamic program. Rather than minimizing the makespan directly, we minimize $a(XP)$ evaluated using the original processing times. Retaining this unrounded value in the rounded dynamic program preserves feasibility with respect to $s$, while the makespan remains recoverable from the state variables. Consequently, the formulation provides a suitable basis for an FPTAS.
	
	\subsection{Computing the makespan from \texorpdfstring{\(XP\), \(XS\), \(YP\), and \(YS\)}{XP, XS, YP, and YS}}
	By Lemmas~\ref{lem:canonical} and~\ref{lem:johnson-orders}, an optimal schedule can be obtained by considering the two canonical forms and applying the prescribed Johnson order within each subset. It therefore remains to determine the makespan induced by a fixed partition.
	
	Let $C_{\max}(XP)$ and $C_{\max}(XS)$ denote the makespans obtained by applying JR to $XP$ and $XS$, respectively. Similarly, let $C_{\max}(YP)$ and $C_{\max}(YS)$ denote the makespans obtained by applying reverse JR to $YP$ and $YS$. Each subset is viewed as an independent two-machine flow-shop instance without an MP.
	\begin{proposition}\label{prop:makespan-a}
		For a schedule with $YP=\emptyset$, the makespan is
		\begin{equation}\label{eq2}
			\begin{split}
				C_{\max}= \max\bigl\{
				& a(N) + d, \, C_{\max}(YS), \, e+a(YS),  \\
				& C_{\max}(XP) + b(XS), \, a(XP) + C_{\max}(XS) + d , \, b(N) \bigr\}.
			\end{split}
		\end{equation}
	\end{proposition}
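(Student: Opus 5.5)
We interpret the statement as the makespan of the best (left-justified) schedule of Form~1 in Lemma~\ref{lem:canonical}, with the Johnson orders of Lemma~\ref{lem:johnson-orders} inside each subset. On $M_1$ the order is $XP/XS/YS$; on $M_2$ it is $YS/XP/XS$. We build this schedule explicitly and compute the completion time of the last operation on each machine as a maximum over critical paths. The proof then has two parts. First, each of the six terms is a lower bound for any schedule of this form. Second, the schedule that starts every operation as early as the form permits attains the maximum.

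\textbf{Lower bounds.} $a(N)+d$ holds because the MP lies inside the span of work on $M_1$, since $0<s<a(N)$, and preemption is resumable. $b(N)$ is immediate. $C_{\max}(YS)$ holds because $YS$ on its own is a flow shop $M_2\to M_1$ in reverse JR order. $e+a(YS)$ holds because every job of $YS$ starts on $M_1$ at or after $s$, and since the MP blocks $[s,e]$, at or after $e$. $C_{\max}(XP)+b(XS)$ holds because on $M_2$ the jobs of $XS$ follow those of $XP$, and the $XP$ block cannot finish before its flow-shop makespan. $a(XP)+C_{\max}(XS)+d$ holds because on $M_1$ the jobs of $XS$ start after $XP$ ends, and each of them crosses or follows $s$, so it absorbs the MP delay $d$ before its flow-shop chain on $M_2$ can proceed.

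\textbf{Attaining the bound.} The harder direction is to schedule so that no extra idle time appears. The plan is as follows.
\begin{enumerate}
\item On $M_1$, process $XP$, then $XS$, then $YS$ consecutively from time $0$, with the MP inserted. $M_1$ is then idle only if a $YS$ operation must wait for its $M_2$ part.
\item On $M_2$, process $YS$ first from time $0$ as a flow-shop prefix. Then process $XP$, whose $M_1$ operations finished early, and then $XS$.
\end{enumerate}
The main obstacle is the interaction between the two blocks: $YS$ occupies $M_2$ at the start while $XP$ and $XS$ occupy $M_1$ at the start. We must show that, whenever a machine idles, some critical path accounts for it.

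For $M_1$, the completion time is the larger of two quantities. One is $a(N)+d$, if $M_1$ is never starved. The other arises when a $YS$ job waits: if $M_1$ is starved at some point, then after the idle time the last start time is governed either by the $YS$ flow-shop chain, giving $C_{\max}(YS)$, or by $e$, giving $e+a(YS)$. The latter applies only if the idle time straddles the MP, which is possible because $M_1$ is free from the end of $XS$.

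For $M_2$, the end of the $YS$ block is $b(YS)$, which is at most $C_{\max}(YS)$ but may still delay $XP$. If $M_2$ is never idle after time $0$, the completion time is $b(N)$. Otherwise, take the last idle period before the end. If it occurs while $XP$ is being processed, the remaining $XP$ chain plus $b(XS)$ gives a value at most $C_{\max}(XP)+b(XS)$. Here we use that the $XP$ chain on $M_1$ starts at $0$ and that any delay caused by $YS$ is absorbed by $b(N)$. If it occurs while $XS$ is being processed, the path starts at $a(XP)$, picks up $d$ from the MP, and continues along the $XS$ chain, bounded by $a(XP)+C_{\max}(XS)+d$.

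We will carry out this last-idle-period (critical path) argument carefully. The subtle point is to check that the $YS$ prefix on $M_2$ can never create idle time not covered by $b(N)$, because $M_2$ processes $YS$ continuously from time $0$. Combining the two machines gives equation~\eqref{eq2}.
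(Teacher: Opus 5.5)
Your proposal is correct and follows essentially the paper's argument: an idle-interval (critical-path) analysis of the semi-active Form-1 schedule on each machine, with the same three patterns on $M_1$ (no idle time, waiting on the $YS$ chain, waiting for $e$) and on $M_2$ (no idle time, last idle interval before $XP$, last idle interval before $XS$). Your version additionally states the lower-bound direction explicitly, which the paper leaves implicit; when writing it out, note that for $XS=\emptyset$ the term $a(XP)+d$ is a lower bound simply because it is dominated by $a(N)+d$, since your ``each $XS$ job crosses $s$'' argument is vacuous there.
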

	
	\begin{proof}
		The last operation completes either on $M_1$ or on $M_2$. These two cases are characterized by Observations~\ref{obs1} and~\ref{obs2}, respectively.
		
		\begin{observation}\label{obs1}
			If the last operation completes on $M_1$, then
			\begin{align*}
				C_{\max}=\max\{a(N)+d,C_{\max}(YS), e+a(YS)\}.
			\end{align*}
		\end{observation}
		
		Suppose first that $XS\neq\emptyset$. Apart from the MP, $XP$ and $XS$ are processed consecutively on $M_1$, and the only possible subsequent idle interval lies between $XS$ and $YS$. If this interval is absent, the bound $a(N)+d$ is attained; otherwise, the bound $C_{\max}(YS)$ is attained (Figure~\ref{fig7}). In the degenerate case $XS=\emptyset$, we have $a(XP)\leq s$, and hence $a(N)+d\leq e+a(YS)$. The completion time of $M_1$ is then $\max\{e+a(YS),C_{\max}(YS)\}$, so the same formula remains valid (Figure~\ref{fig18}).
		
		\begin{figure}[ht]
			\centering
			\begin{subfigure}[b]{0.45\textwidth}
				\centering
				\includegraphics[width=\textwidth]{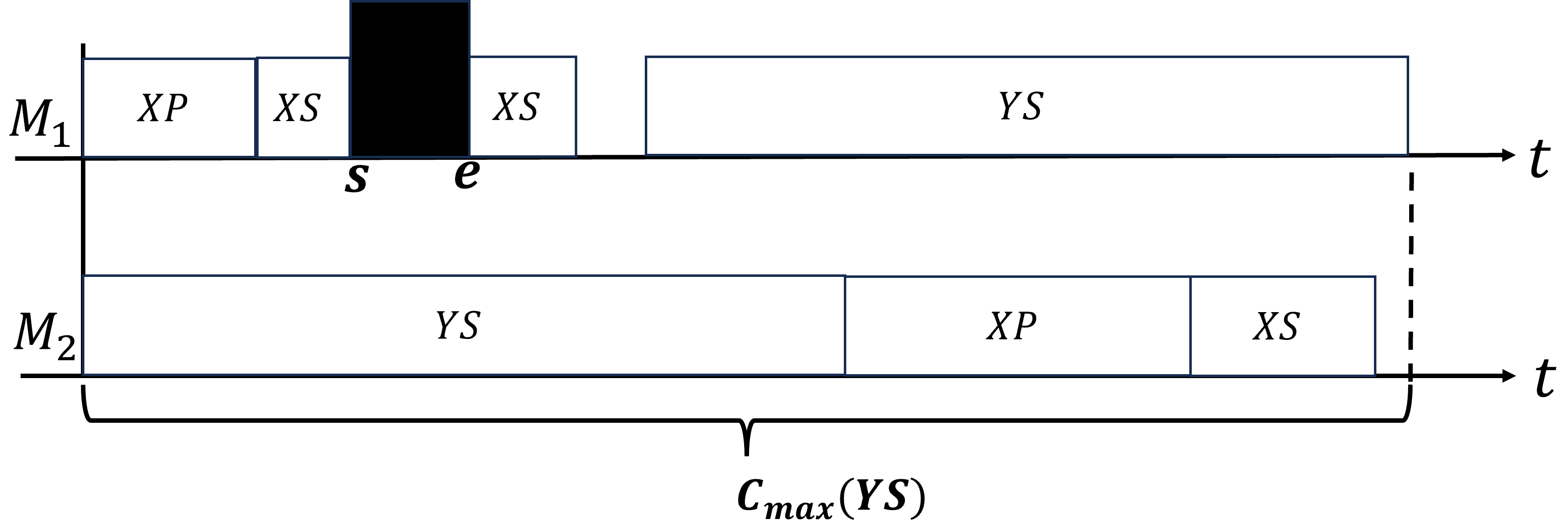}
				\caption{An idle interval between $XS$ and $YS$ on $M_1$.}
				\label{fig7}
			\end{subfigure}
			\hfill
			\begin{subfigure}[b]{0.45\textwidth}
				\centering
				\includegraphics[width=\textwidth]{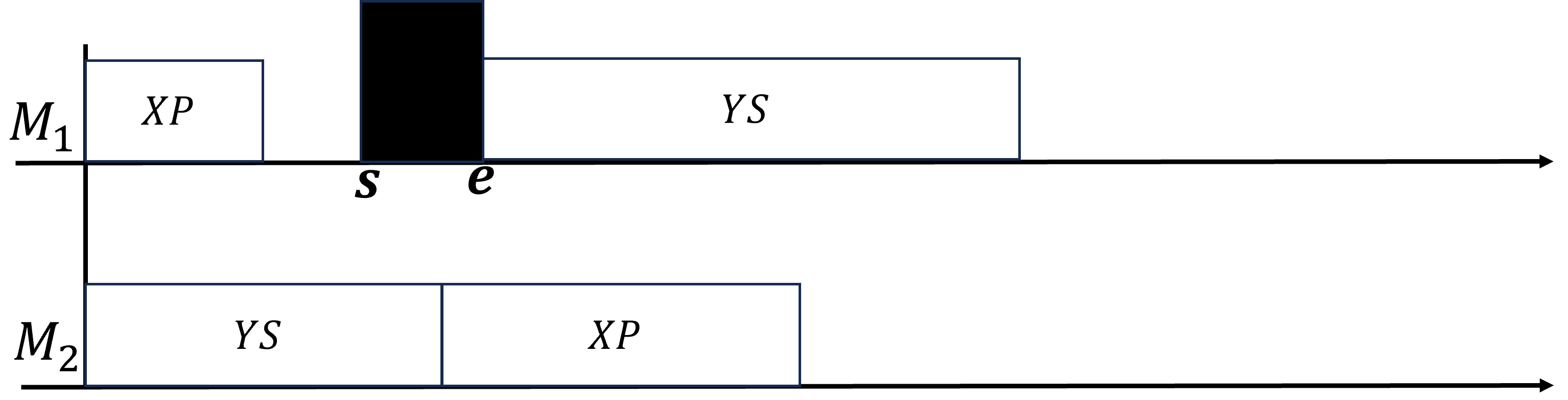}
				\caption{$XS = \emptyset$.}
				\label{fig18}
			\end{subfigure}
			\caption{The last operation completes on $M_1$.}
			\label{fig7_18}
		\end{figure}
		
		\begin{observation}\label{obs2}
			If the last operation completes on $M_2$, then
			\begin{align*}
				C_{\max}=\max\{C_{\max}(XP)+b(XS),a(XP)+d+C_{\max}(XS),b(N)\}.
			\end{align*}
		\end{observation}
		
		Assume first that $XS\neq\emptyset$. The three terms in Observation~\ref{obs2} correspond to the three possible idle-time patterns on $M_2$. If an idle interval occurs between $YS$ and $XP$, but $XP$ and $XS$ are consecutive, then $C_{\max}(XP)+b(XS)$ is attained (Figure~\ref{fig8}). If an idle interval occurs between $XP$ and $XS$, then $a(XP)+d+C_{\max}(XS)$ is attained, regardless of whether an earlier idle interval also occurs (Figure~\ref{fig9}). If neither interval occurs, $M_2$ is continuously busy and completes at time $b(N)$.
		
		\begin{figure}[ht]
			\centering
			\begin{subfigure}[b]{0.45\textwidth}
				\centering
				\includegraphics[width=\textwidth]{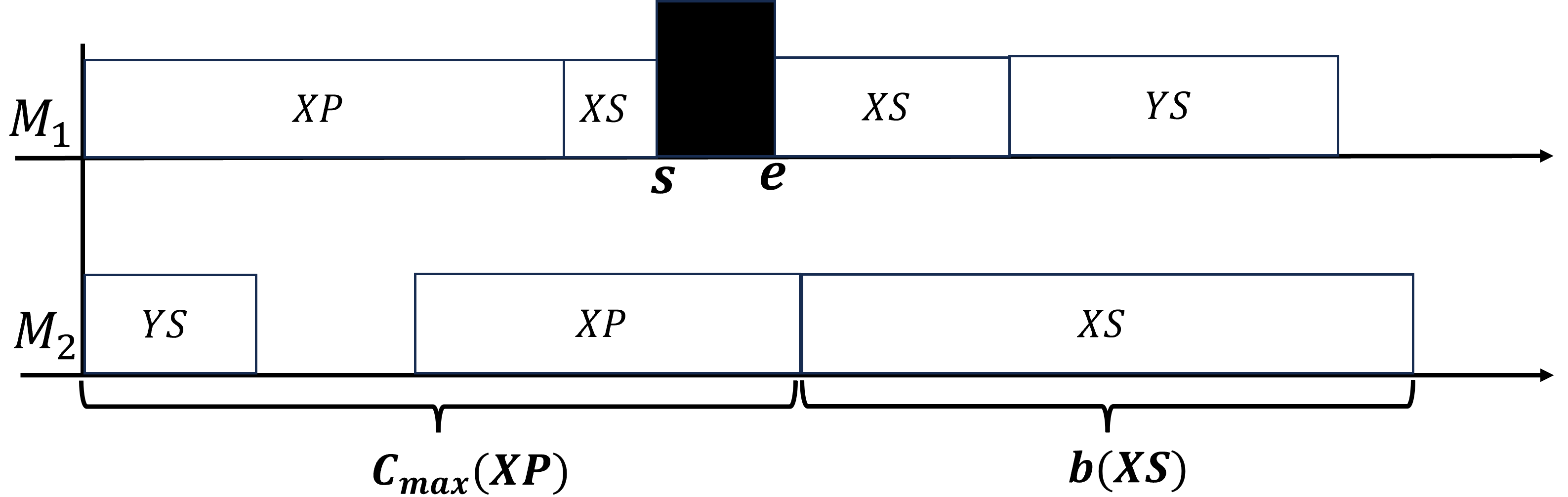}
				\caption{No idle interval between $XP$ and $XS$ on $M_2$.}
				\label{fig8}
			\end{subfigure}
			\hfill
			\begin{subfigure}[b]{0.45\textwidth}
				\centering
				\includegraphics[width=\textwidth]{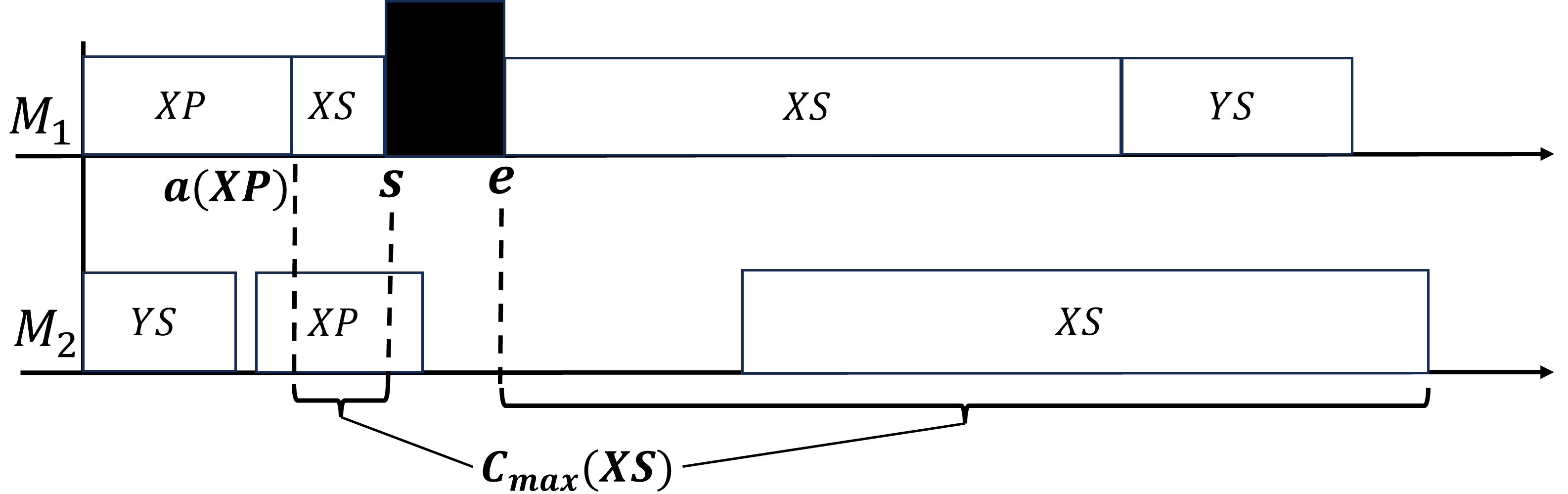}
				\caption{An idle interval between $XP$ and $XS$ on $M_2$, $XS \neq \emptyset$.}
				\label{fig9}
			\end{subfigure}
			\caption{The last operation completes on $M_2$.}
			\label{fig8_9}
		\end{figure}
		
		It remains to consider the degenerate case $XS=\emptyset$. Deleting the empty block leaves only one possible idle interval on $M_2$, namely, the interval between $YS$ and $XP$. If this interval is absent, $M_2$ completes at time $b(N)$ (Figure~\ref{fig19}); if it is present, $M_2$ completes at time $C_{\max}(XP)$ (Figure~\ref{fig20}). Moreover, $a(XP)\leq s$ implies $a(XP)+d\leq e$. Because the last operation is assumed to complete on $M_2$, its completion time is at least that of $M_1$, and hence at least $e$. Thus, the term $a(XP)+d+C_{\max}(XS)=a(XP)+d$ is redundant in this case, and Observation~\ref{obs2} remains valid without a separate formula.
		\begin{figure}[ht]
			\centering
			\begin{subfigure}[b]{0.45\textwidth}
				\centering
				\includegraphics[width=\textwidth]{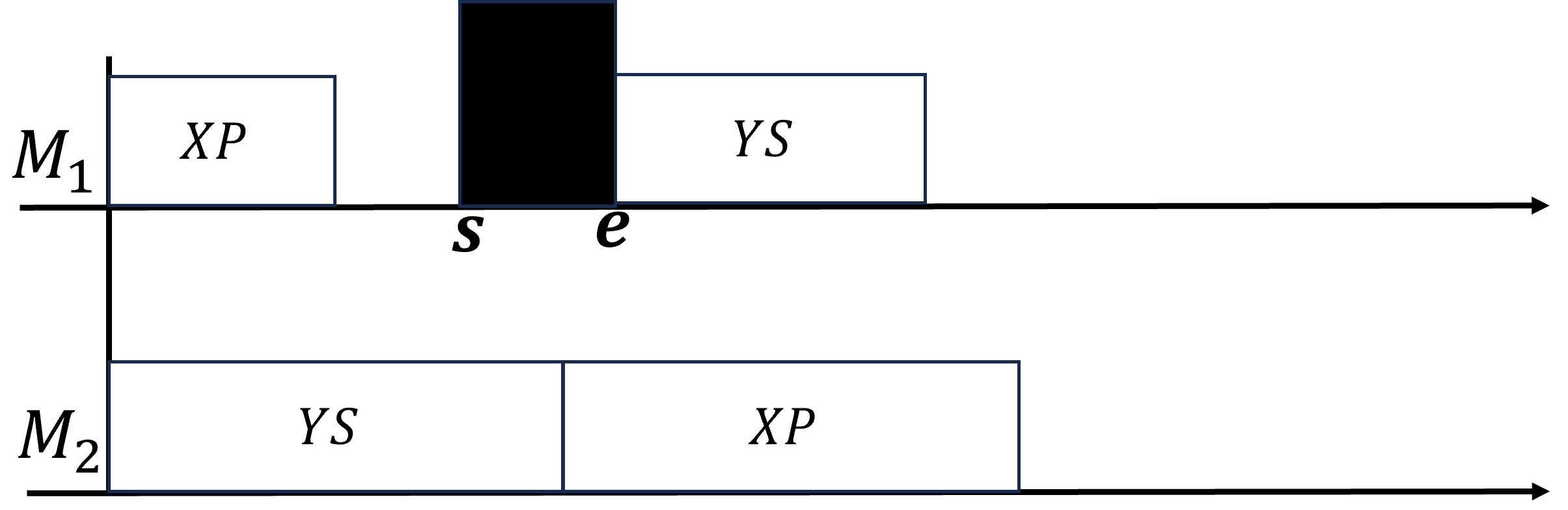}
				\caption{No idle interval between $YS$ and $XP$ on $M_2$.}
				\label{fig19}
			\end{subfigure}
			\hfill
			\begin{subfigure}[b]{0.45\textwidth}
				\centering
				\includegraphics[width=\textwidth]{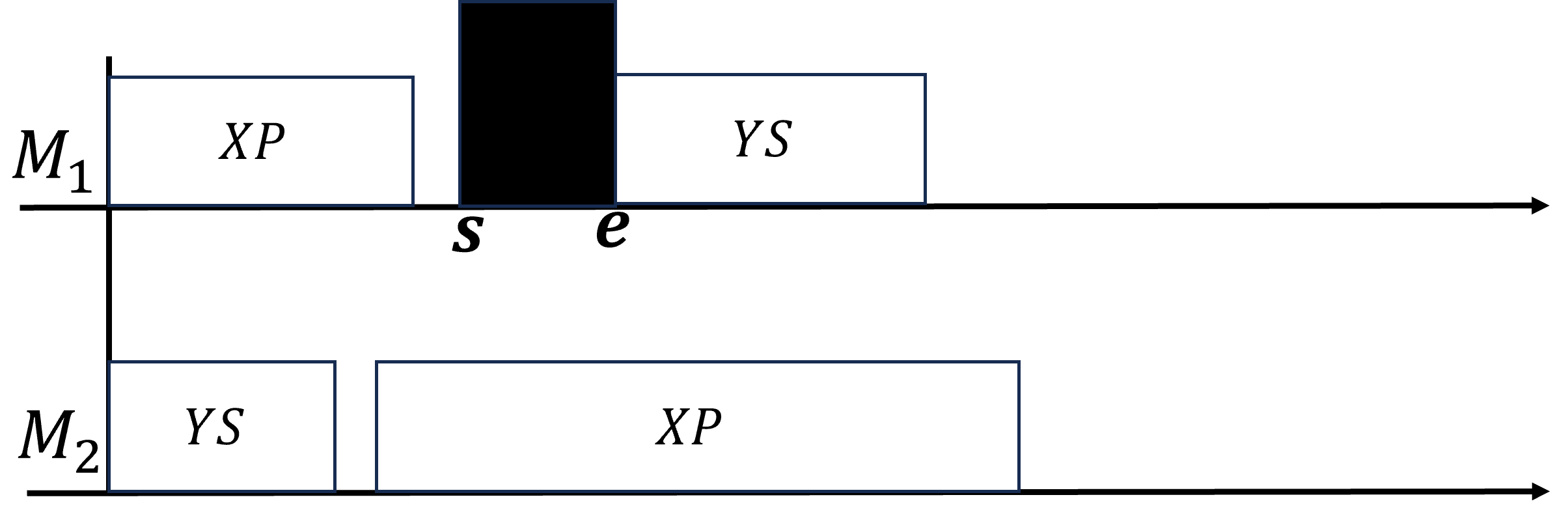}
				\caption{An idle interval between $YS$ and $XP$ on $M_2$.}
				\label{fig20}
			\end{subfigure}
			\caption{The last operation completes on $M_2$, $XS = \emptyset$.}
			\label{fig19_20}
		\end{figure}
		
		Combining these cases gives
		\[
		\begin{split}
			C_{\max}= \max\bigl\{
			& a(N) + d, \, C_{\max}(YS), \, e + a(YS),  \\
			& C_{\max}(XP) + b(XS), \, a(XP) + C_{\max}(XS) + d , \, b(N) \bigr\}.
		\end{split}
		\]
		
	\end{proof}
	\begin{proposition}\label{prop:makespan-b}
		For a schedule with $XS=\emptyset$, the makespan is
		\begin{equation}\label{eq3}
			\begin{split}
				C_{\max}= \max\bigl\{
				& b(YP) + C_{\max}(YS), \, C_{\max}(YP) + a(YS) + d, \\
				& a(N) + d, \, e+a(YS), \, b(N), \, C_{\max}(XP)\bigr\}.
			\end{split}
		\end{equation}
	\end{proposition}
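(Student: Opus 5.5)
The proof will parallel that of Proposition~\ref{prop:makespan-a}. We fix a schedule in Form~2 of Lemma~\ref{lem:canonical} with $XS=\emptyset$. On $M_1$ the order is $XP/YP/YS$ and on $M_2$ it is $YP/YS/XP$, with the Johnson orders of Lemma~\ref{lem:johnson-orders} inside each block, and we take the schedule to be left-justified. We then split according to whether the last operation completes on $M_1$ or on $M_2$. For each machine we locate the possible idle intervals and show that every idle-time pattern realizes exactly one of the six terms in (\ref{eq3}). Each term will also be checked to be a valid lower bound on the makespan, so the maximum is attained.

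\emph{Machine $M_2$.} $M_2$ starts with $YP$, which has no predecessor on $M_2$, so $YP$ runs from time $0$ in reverse-JR order. The possible idle intervals are:
\begin{itemize}
\item \emph{between $YP$ and $YS$:} $YS$ on $M_2$ is released by $M_2$ itself only after $YP$. Viewing $YP\cup YS$ as a reversed flow shop, $YS$ on $M_2$ can always start immediately, so an idle here cannot arise. This point must be verified rather than assumed.
\item \emph{before or within $XP$:} each $XP$ job is released from $M_1$ by time $a(XP)\le s$, since $XP$ runs first on $M_1$.
\end{itemize}
If $M_2$ has no idle time, it completes at $b(N)$. If an idle interval occurs inside the $XP$ block, the completion time is $C_{\max}(XP)$, because $XP$ starts at time $0$ on $M_1$ and the last such idle time makes the tail behave like the JR flow shop of $XP$ alone. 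Every other pattern gives a time at most $b(N)$ or at most $C_{\max}(XP)$. Hence the $M_2$ completion time is $\max\{b(N),C_{\max}(XP)\}$.

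\emph{Machine $M_1$.} Here $XP$ runs in $[0,a(XP)]$. After it come $YP$, which must start before $s$ and may be interrupted by the MP, and then $YS$, which starts at or after $s$. The $Y$ jobs on $M_1$ wait for their $M_2$ operations, which gives the following cases.
\begin{itemize}
\item \emph{No idle on $M_1$ except the MP:} completion $a(N)+d$.
\item \emph{Last idle immediately before $YS$ at $e$:} since $YS$ starts at $\ge s$ and cannot run during the MP, completion $e+a(YS)$.
\item \emph{Last idle inside $YS$:} the $YS$ jobs on $M_1$ are governed by the reversed flow shop of $YS$, which starts on $M_2$ at $b(YP)$. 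Completion $b(YP)+C_{\max}(YS)$.
\item \emph{Last idle inside $YP$ (or before it), with $YS$ following contiguously:} the reversed flow shop of $YP$ started at time $0$ ends at $C_{\max}(YP)$, and then $a(YS)$ is added. The MP falls inside this stretch, because $YP$ starts before $s$ and $YS$ starts after it, so $d$ is added as well. Completion $C_{\max}(YP)+a(YS)+d$.
\end{itemize}

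The main obstacle is the bookkeeping of the MP inside the $YP$ block. An interrupted $YP$ operation resumes at $e$, so we must argue that the time $d$ is counted exactly once in the $C_{\max}(YP)+a(YS)+d$ term. This is subtle if the last idle before the MP and the MP itself interact, for example when the idle time abuts $s$, or when some $YP$ job is ready only after $e$. The plan is to show that in any such subcase the completion time is dominated by $e+a(YS)$ or by $a(N)+d$. The degenerate cases $YP=\emptyset$ or $YS=\emptyset$ are handled as in Proposition~\ref{prop:makespan-a}: the affected terms become redundant, using $a(XP)\le s$. Combining both machines then yields (\ref{eq3}).
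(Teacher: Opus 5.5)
Your proposal is correct and follows essentially the same route as the paper: you split on whether the last operation finishes on $M_1$ or $M_2$, observe that $M_2$ runs $YP/YS$ without idle time so its completion is $\max\{b(N),C_{\max}(XP)\}$, and match each idle pattern on $M_1$ to one of the remaining four terms, adding an explicit lower-bound check for each term that the paper leaves implicit. The subcase you flag as the main obstacle is vacuous, because $YP$ is defined by $t_{i,1}<s$: every $YP$ job starts on $M_1$ before $s$, so all idle time inside the $YP$ block lies before the MP and $d$ is counted exactly once; for $YS=\emptyset$, the fact that the MP is crossed comes from the standing assumption $s<a(N)$ rather than from $a(XP)\le s$.
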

	\begin{proof}
		The last operation completes either on $M_1$ or on $M_2$, as characterized by Observations~\ref{obs3} and~\ref{obs4}, respectively.

		\begin{observation}\label{obs3}
			If the last operation completes on $M_1$, then
			\begin{align*}
				C_{\max}=\max\{&C_{\max}(YP)+a(YS)+d,
				b(YP)+C_{\max}(YS),\\
				&a(N)+d,e+a(YS)\}.
			\end{align*}
		\end{observation}

		Suppose first that $YP\neq\emptyset$. Three cases arise on $M_1$. If there is no idle time between $YP$ and $YS$ but there is an idle interval between $XP$ and $YP$ (Figure~\ref{figa1}), then $C_{\max}=C_{\max}(YP)+a(YS)+d$. If there is an idle interval between $YP$ and $YS$, irrespective of whether one also occurs between $XP$ and $YP$ (Figure~\ref{figa2}), then $C_{\max}=b(YP)+C_{\max}(YS)$. If $M_1$ has no idle interval, then $C_{\max}=a(N)+d$.

		If $YP=\emptyset$, the subset orders reduce to $XP/YS$ on $M_1$ and $YS/XP$ on $M_2$, as in Figure~\ref{fig18}. Since every operation of $YS$ on $M_1$ starts no earlier than $e$, the MP contributes the term $e+a(YS)$. If precedence constraints within the reverse-JR schedule of $YS$ delay processing further, the relevant term is $C_{\max}(YS)$, which equals $b(YP)+C_{\max}(YS)$ because $b(YP)=0$. Consequently, Observation~\ref{obs3} also covers the case $YP=\emptyset$.

		\begin{figure}[ht]
			\centering
			\begin{subfigure}[b]{0.48\textwidth}
				\centering
				\includegraphics[height=2.35cm]{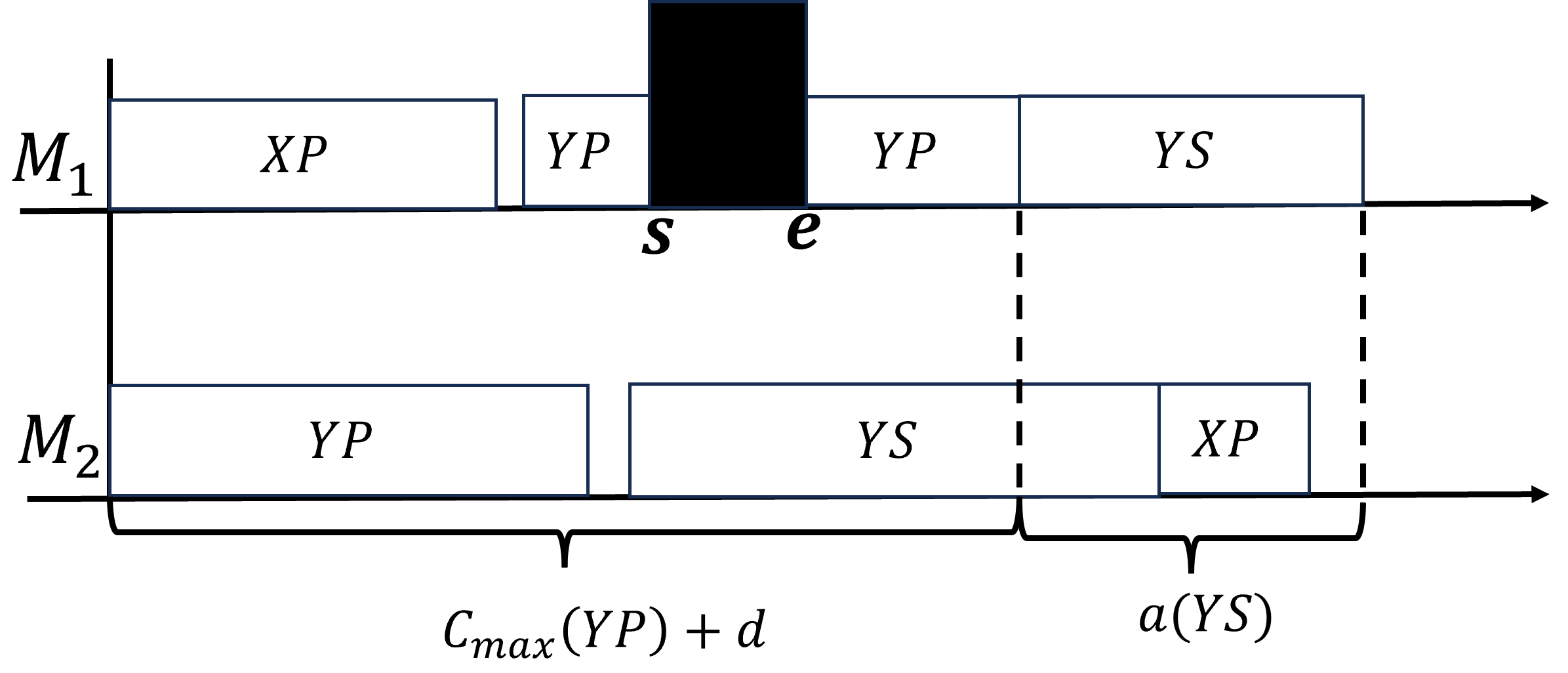}
				\caption{An idle interval between $XP$ and $YP$.}
				\label{figa1}
			\end{subfigure}
			\hfill
			\begin{subfigure}[b]{0.48\textwidth}
				\centering
				\includegraphics[height=2.35cm]{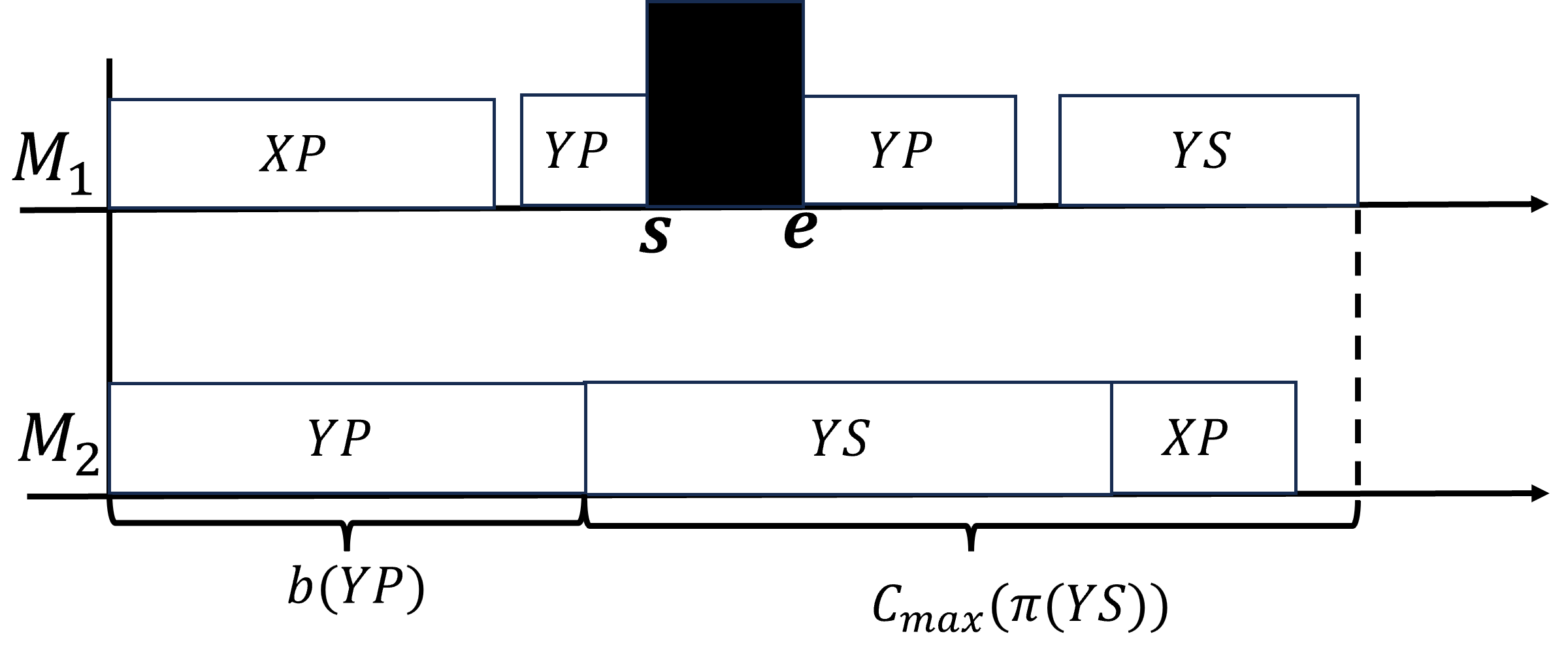}
				\caption{An idle interval between $YP$ and $YS$.}
				\label{figa2}
			\end{subfigure}
			\caption{Idle-time configurations when the last operation completes on $M_1$.}
			\label{figa1_a2}
		\end{figure}

		\begin{observation}\label{obs4}
			If the last operation completes on $M_2$, then
			\begin{align*}
				C_{\max}=\max\{b(N),C_{\max}(XP)\}.
			\end{align*}
		\end{observation}

		On $M_2$, no idle time occurs between $YP$ and $YS$, although an idle interval may occur between $YS$ and $XP$. If no such interval occurs, then $C_{\max}=b(N)$; otherwise, as in Figure~\ref{figa3}, $C_{\max}=C_{\max}(XP)$. This argument remains valid when $YP=\emptyset$, in which case $YS$ is simply the first subset processed on $M_2$.

		\begin{figure}[ht]
			\centering
			\includegraphics[width=0.8\textwidth]{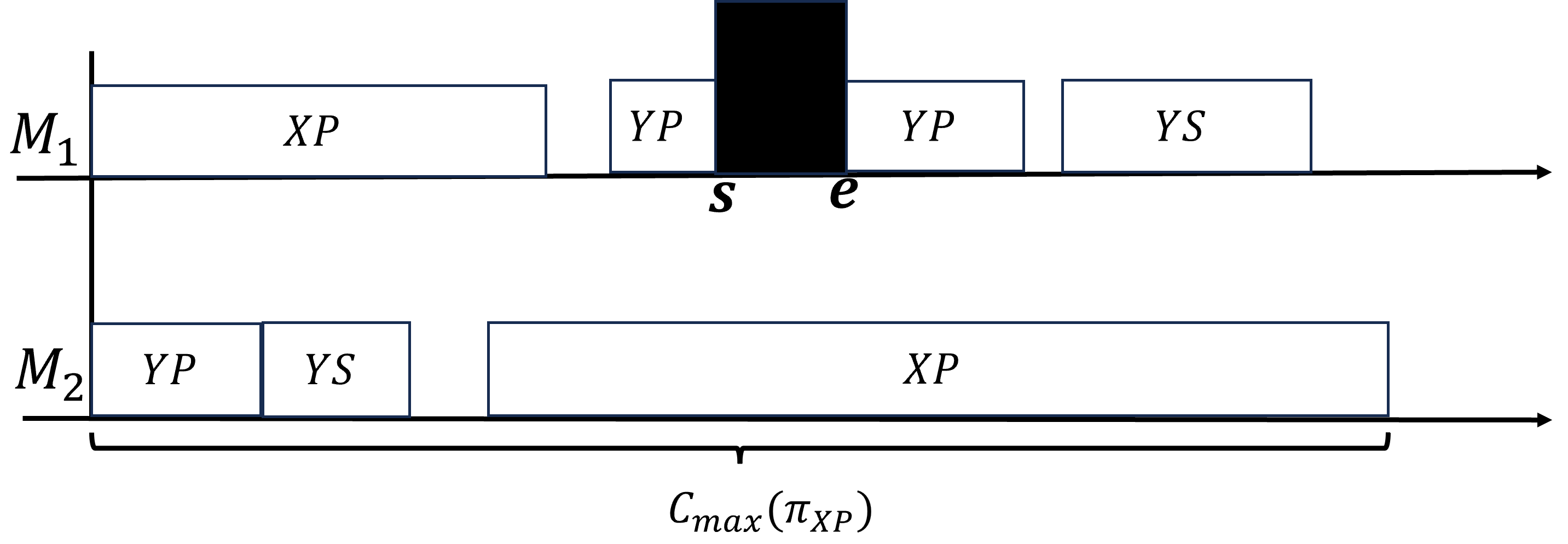}
			\caption{An idle interval between $YS$ and $XP$ on $M_2$.}
			\label{figa3}
		\end{figure}

		Combining these cases gives
		\begin{equation*}
			\begin{split}
				C_{\max}= \max\bigl\{
				& b(YP) + C_{\max}(YS), \, C_{\max}(YP) + a(YS) + d, \\
				& a(N) + d, \, e+a(YS), \, b(N), \, C_{\max}(XP)\bigr\},
			\end{split}
		\end{equation*}
		which is exactly~\eqref{eq3}.
	\end{proof}
	
	\subsection{Dynamic program}
	
	The dynamic program rests on three observations. First, each canonical schedule decomposes into three flow-shop subproblems whose schedules can be combined into an optimal schedule for the original instance. Each subproblem contributes two state variables, recording processing time on one of the machines or the subproblem makespan; together with the job index, this gives seven dimensions. Second, the value associated with a state records $a(N_k\cap XP)$ evaluated using the original processing times, rather than the makespan itself. Third, the subproblems omit the MP because Propositions~\ref{prop:makespan-a} and~\ref{prop:makespan-b} account for it explicitly when the global makespan is reconstructed.
	
	We first index the jobs in Johnson order. For each job $k$, the recurrence considers the three subsets to which $k$ may be assigned and updates the corresponding state variables. After the last job has been processed, we evaluate every feasible terminal state using Proposition~\ref{prop:makespan-a} or Proposition~\ref{prop:makespan-b} and recover a schedule by backtracking.

The formulation has two cases. Case A corresponds to Proposition~\ref{prop:makespan-a}, where $YP=\emptyset$, and Case B corresponds to Proposition~\ref{prop:makespan-b}, where $XS=\emptyset$.

	In the exact dynamic program, this value coincides with the state coordinate $u_1$, since both are computed from the original processing times. We nevertheless retain the value function explicitly because, in the rounded dynamic program, $u_1^0$ is computed from the scaled processing times whereas the value function continues to accumulate the original processing times $a_i$. This distinction ensures that feasibility with respect to the original bound $s$ can still be checked after rounding.
	
	\subsubsection{Case A: \texorpdfstring{\(YP=\emptyset\)}{YP = empty}}
	Index the jobs according to JR, and let $N_k$ denote the first $k$ jobs in this order.
	
	Let $f_k^a(u_1,u_2,u_3,v_1,v_2,v_3)$ be the minimum value of $a(N_k\cap XP)$, evaluated using the original processing times, among all partial assignments represented by the state. When the arguments are clear, we write simply $f_k^a$. The state variables are
	\begin{alignat*}{2}
		u_1 &= a(N_k \cap XP) & \; &  \text{total processing time on } M_1 \text{ of } N_k \cap XP; \\
		u_2 &= b(N_k \cap XS) & \; & \text{total processing time on } M_2 \text{ of } N_k \cap XS; \\
		u_3 &= a(N_k \cap YS) & \; & \text{total processing time on } M_1 \text{ of } N_k \cap YS; \\
		v_1 &= C_{\max}(\pi(N_k \cap XP)) & \; &  \text{makespan of } N_k \cap XP; \\
		v_2 &= C_{\max}(\pi(N_k \cap XS)) & \; & \text{makespan of } N_k \cap XS; \\
		v_3 &= C_{\max}(\pi(N_k \cap YS)) & \; & \text{makespan of } N_k \cap YS.
	\end{alignat*}
	
	Let $h_k = a(N_k) - u_1 - u_3$.
	
	\noindent\textit{Initial conditions.}
	\[
	f_k^a =
	\begin{cases}
		0,
		& \textrm{if} \quad (k, u_1, u_2, u_3, v_1, v_2, v_3) = (0, 0, 0, 0, *, *, *) \\[4pt]
		+\infty,
		& \textrm{otherwise}
	\end{cases}
	\]
	Here $*$ denotes an arbitrary nonnegative value.
	
	\noindent\textit{Recurrence.}
	\[
	f_k^a =
	\min\begin{cases}
		f_{k-1}^a (u_1-a_k, u_2, u_3, v_1-b_k, v_2, v_3) + a_k,
		& \textrm{if} \quad u_1 + b_k \leq v_1, \quad (a) \\[4pt]
		f_{k-1}^a (u_1, u_2-b_k, u_3, v_1, v_2-b_k, v_3),
		& \textrm{if} \quad h_k + b_k \leq v_2, \quad (b) \\[4pt]
		f_{k-1}^a (u_1, u_2, u_3-a_k, v_1, v_2, v_3-b_k),
		& \textrm{if} \quad u_3 + b_k \leq v_3, \quad (c) \\[4pt]
		+\infty,
		& \textrm{otherwise.}
	\end{cases}
	\]
	The ranges are $k\in\{0,1,\ldots,n\}$, $u_1\in[0,s]$, $u_2\in[0,b(N_k)]$, $u_3\in[0,a(N_k)]$, $v_1\in[u_1,V]$, $v_2\in[u_2,V]$, and $v_3\in[u_3,V]$.
	
	Transitions (a), (b), and (c) assign job $k$ to $XP$, $XS$, and $YS$, respectively.
	
	\begin{observation}\label{obs5}
		Every two-machine flow-shop instance admits an optimal schedule in which the only idle interval on $M_2$ precedes its first operation.
	\end{observation}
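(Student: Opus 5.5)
Starting from an arbitrary optimal schedule, I would left-shift all work on $M_2$ without changing the sequence of jobs on either machine; this can only lower completion times. The proof then amounts to three facts: a permutation schedule is optimal, semi-active timing on $M_2$ has a simple closed form, and that form leaves no idle gap on $M_2$ after the first start.

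\emph{Reduction to permutation schedules.} By Johnson's theorem, recalled earlier, JR produces an optimal permutation schedule, that is, the same job order $\sigma$ on both machines. So I fix this permutation schedule and its order $\sigma$. On $M_1$, the jobs are processed back to back from time $0$, so job $\sigma(j)$ finishes on $M_1$ at $A_j=\sum_{l\le j}a_{\sigma(l)}$. On $M_2$, I would use the semi-active timing
\[
C_j=\max\{A_j,\;C_{j-1}\}+b_{\sigma(j)}, \qquad C_0=0,
\]
which is no later than any other feasible timing for $\sigma$. It is therefore optimal.

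\emph{Shifting idle time forward.} The schedule produced this way may still leave idle gaps on $M_2$ between operations. To remove them, I would right-justify the operations on $M_2$: keep the final completion $C_n$ fixed and set the start of the $j$-th operation to $C_n-\sum_{l\ge j}b_{\sigma(l)}$. The makespan is then still $C_n$, and $M_2$ is busy without interruption from its first start until $C_n$. The only remaining step is to check feasibility, namely that each operation still starts no earlier than $A_j$.

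\emph{Feasibility of the shifted schedule.} Unrolling the recurrence gives
\[
C_n=\max_j\Bigl(A_j+\sum_{l\ge j}b_{\sigma(l)}\Bigr),
\]
and hence
\[
C_n-\sum_{l\ge j}b_{\sigma(l)}\ \ge\ A_j .
\]
So every operation on $M_2$ starts after its job has finished on $M_1$, and the shifted schedule is feasible. The operations on $M_2$ keep their order and do not overlap. I expect this unrolling step to be the main point of the proof; the rest is routine bookkeeping.
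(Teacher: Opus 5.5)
Your proof is correct. The paper states this observation without proof, and your right-justification of the $M_2$ operations, justified by $C_n\ge A_j+\sum_{l\ge j}b_{\sigma(l)}$, is exactly the argument the paper relies on in Part~(a) when it delays the jobs on $M_2$.

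Two small points. First, Johnson's rule and the unrolled recurrence are not needed. In any feasible schedule of makespan $C$, the $j$-th operation on $M_2$ (with $\sigma$ the order on $M_2$) already starts no later than $C-\sum_{l\ge j}b_{\sigma(l)}$. Right-justifying therefore only delays the last operation of each job, which cannot break feasibility. Second, your opening sentence says ``left-shift'', but the argument actually right-shifts.
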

	
	\paragraph{Part (a): assigning job \texorpdfstring{$k$ to $XP$}{k to XP}.}
	Let $v_1'$ be the value of $v_1$ before job $k$ is assigned. If $u_1+b_k<v_1$, as in Figure~\ref{fig4:suba}, then $v_1'=v_1-b_k$. If $u_1+b_k=v_1$, as in Figure~\ref{fig4:subb}, Observation~\ref{obs5} allows the jobs preceding $k$ on $M_2$ to be delayed by $u_1$, again giving $v_1'=v_1-b_k$. Thus, in both cases (Figure~\ref{fig4}),
	\[
	f_k^a = f_{k-1}^a (u_1-a_k, u_2, u_3, v_1-b_k, v_2, v_3) + a_k.
	\]
	
	\begin{figure}[ht]
		\centering
		\begin{subfigure}[b]{0.45\textwidth}
			\centering
			\includegraphics[width=\textwidth]{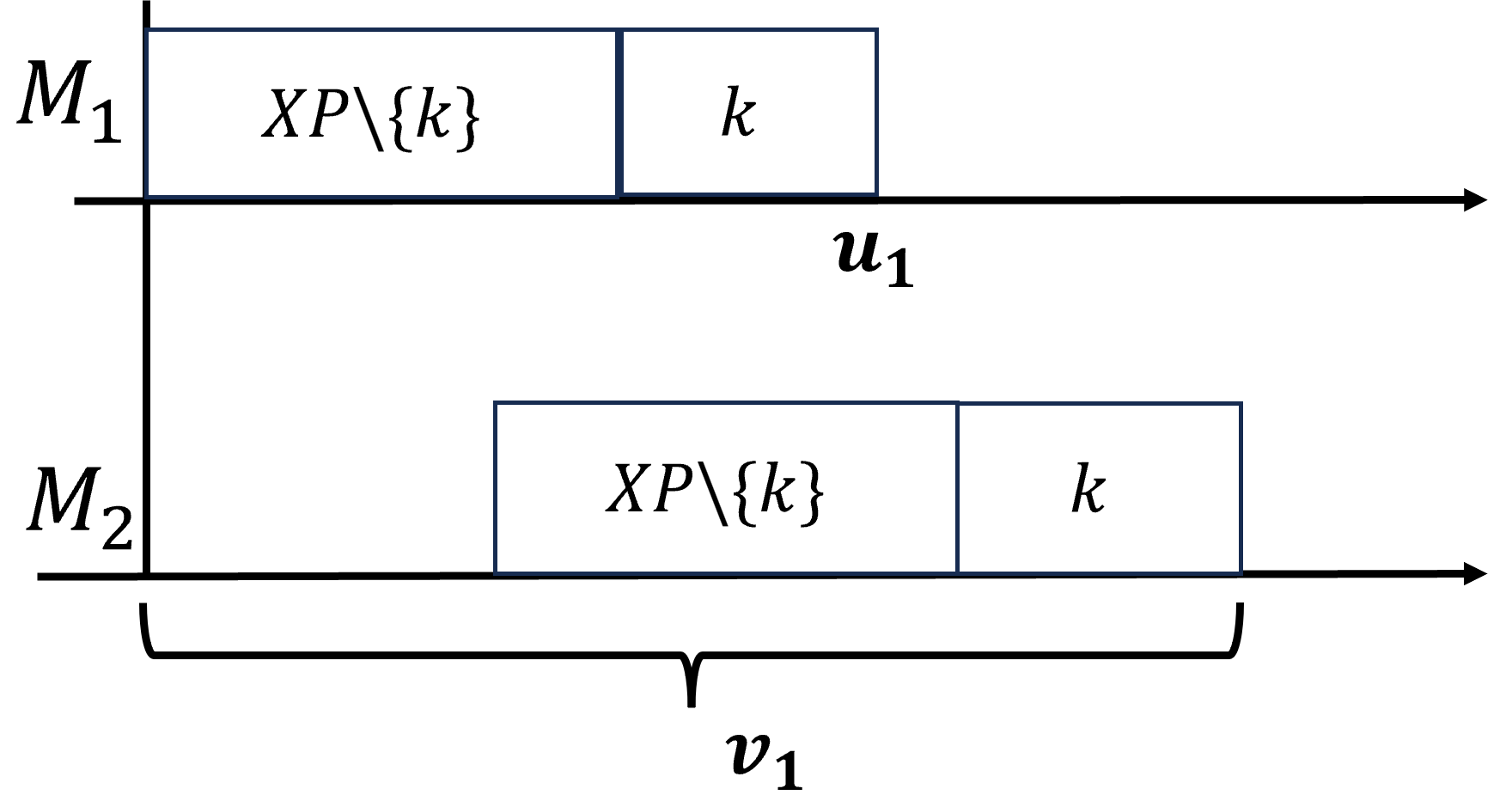}
			\caption{$u_1 + b_k < v_1$}
			\label{fig4:suba}
		\end{subfigure}
		\hfill
		\begin{subfigure}[b]{0.45\textwidth}
			\centering
			\includegraphics[width=\textwidth]{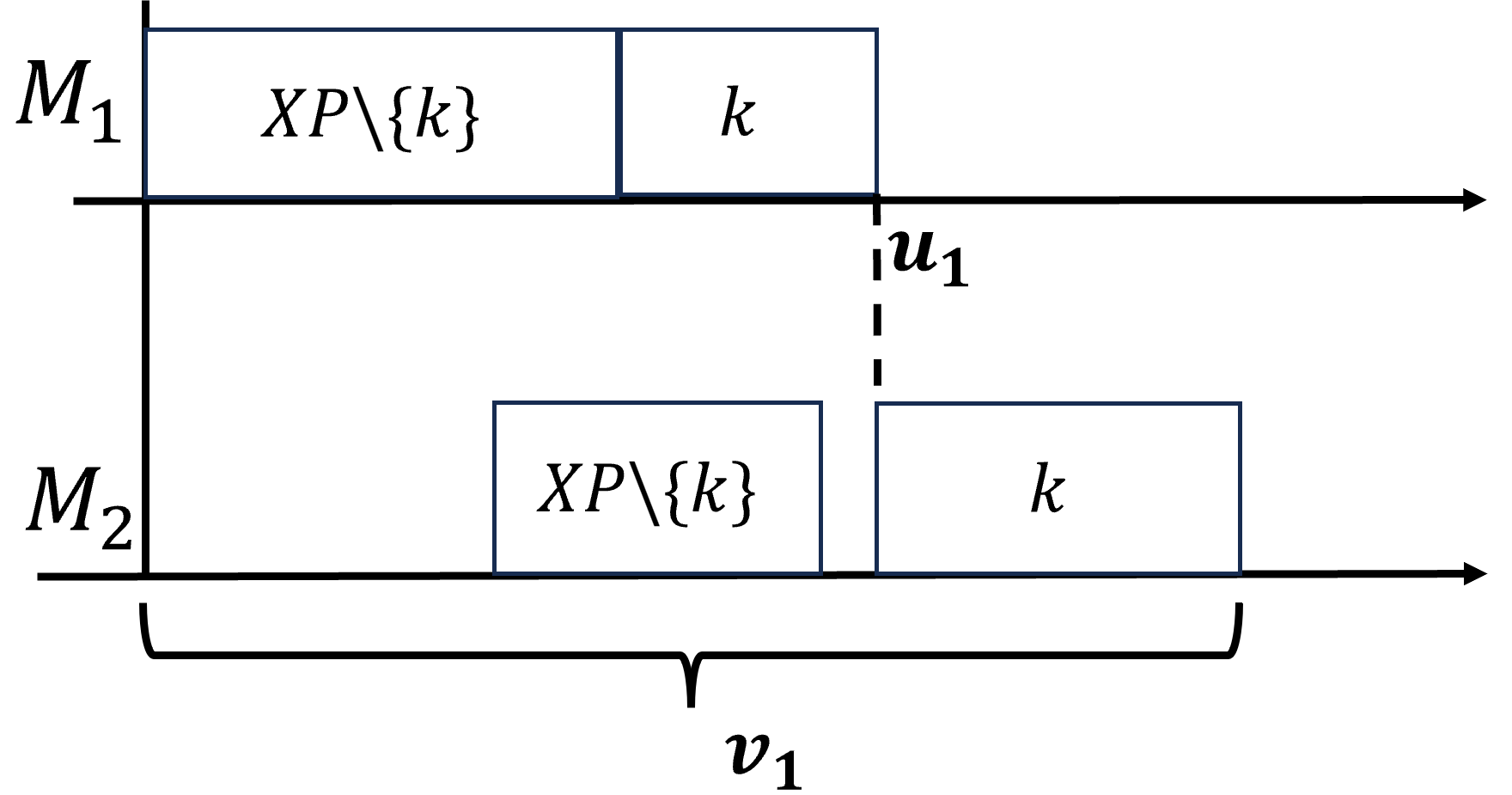}
			\caption{$u_1 + b_k = v_1$}
			\label{fig4:subb}
		\end{subfigure}
		\caption{Assigning job $k$ to $XP$.}
		\label{fig4}
	\end{figure}
	
	\paragraph{Part (b): assigning job \texorpdfstring{$k$ to $XS$}{k to XS}.}
	
	Let $v_2'$ be the value of $v_2$ before job $k$ is assigned. By the same argument as in part~(a), $v_2'=v_2-b_k$. Hence (Figure~\ref{fig5}),
	\[f_k^a = f_{k-1}^a (u_1, u_2-b_k, u_3, v_1, v_2-b_k, v_3)\]
	
	\begin{figure}[ht]
		\centering
		\begin{subfigure}[b]{0.45\textwidth}
			\centering
			\includegraphics[width=\textwidth]{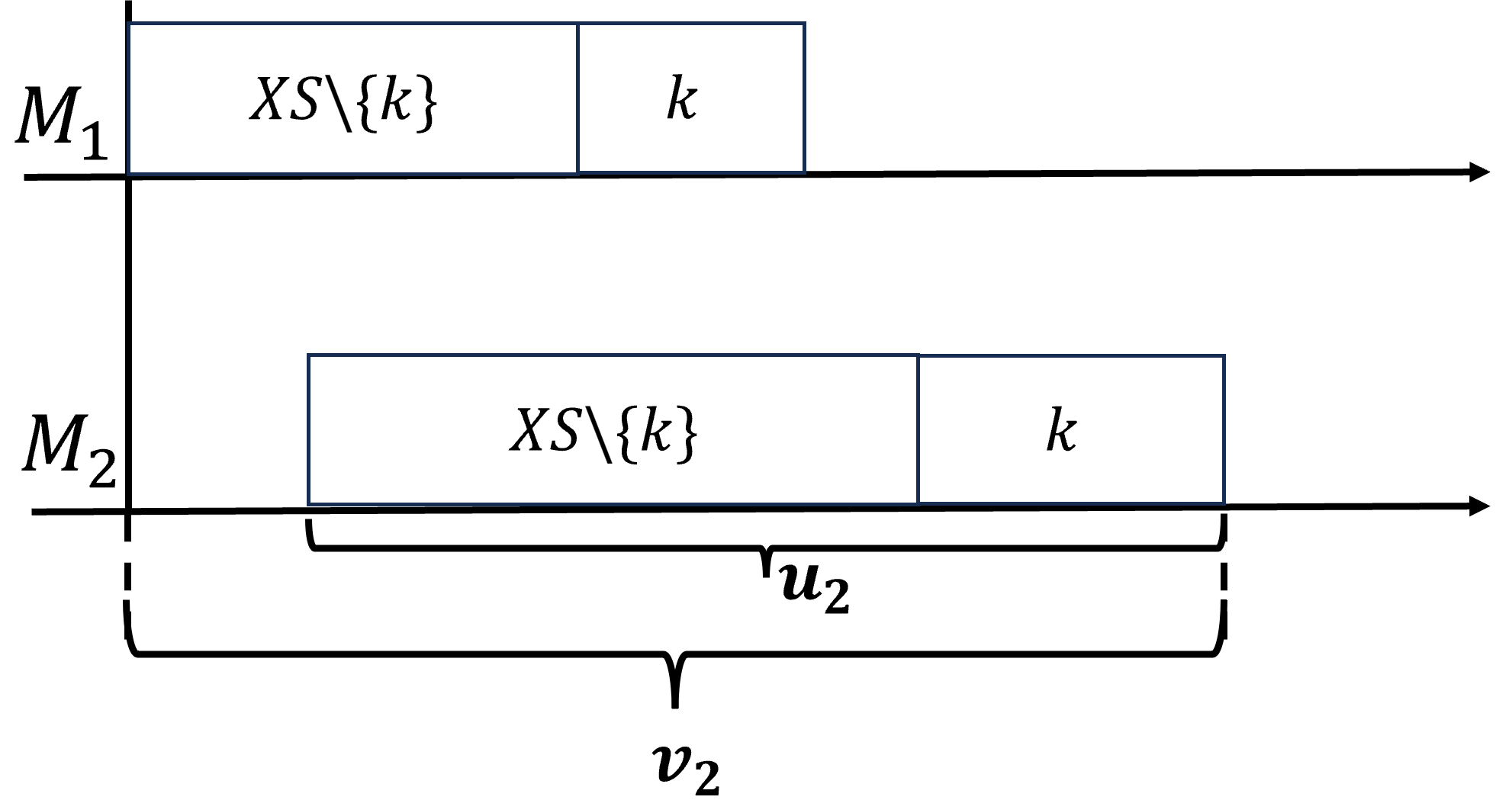}
			\caption{$a(XS) + b_k < v_2$}
			\label{fig5:suba}
		\end{subfigure}
		\hfill
		\begin{subfigure}[b]{0.45\textwidth}
			\centering
			\includegraphics[width=\textwidth]{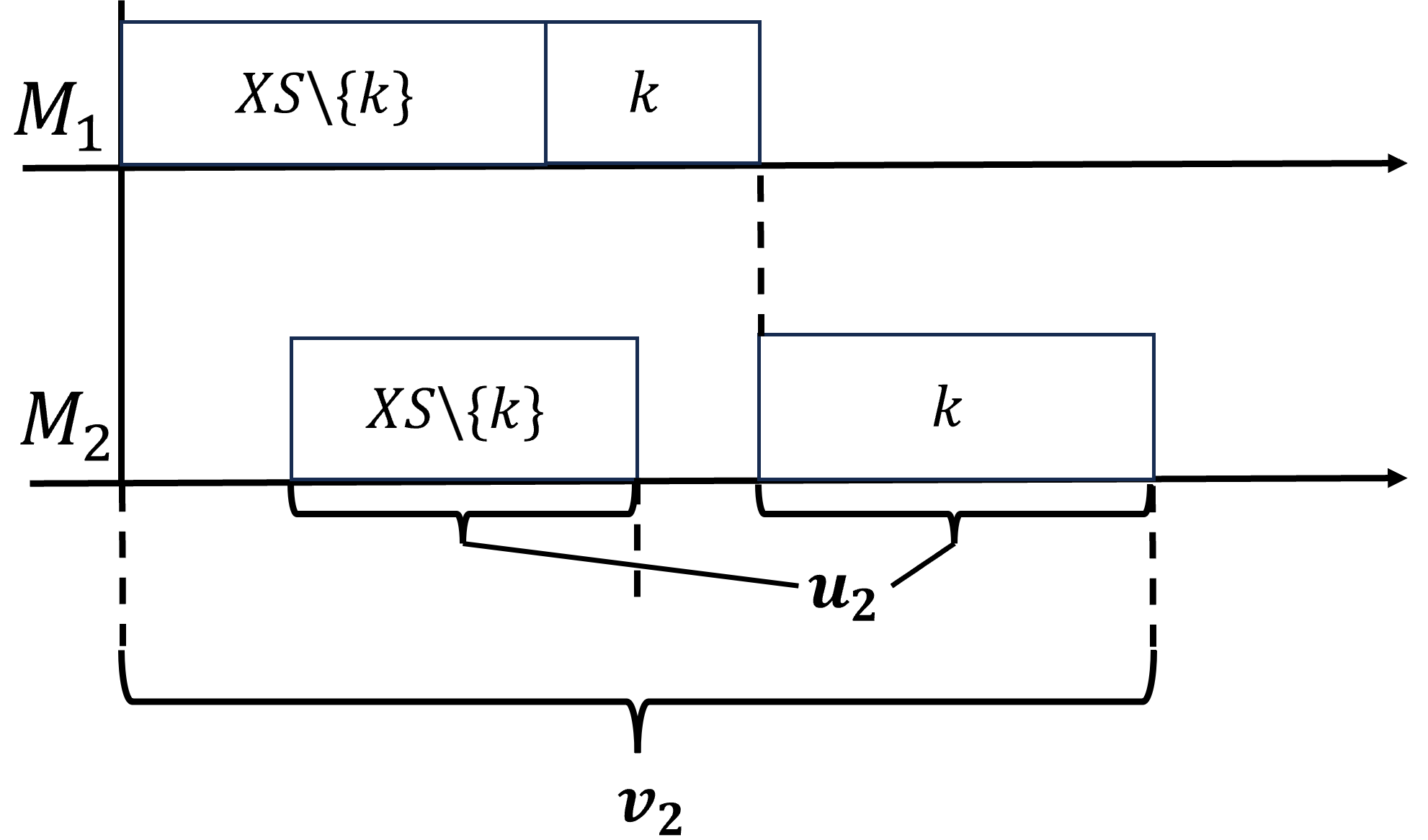}
			\caption{$a(XS) + b_k = v_2$}
			\label{fig5:subb}
		\end{subfigure}
		\caption{Assigning job $k$ to $XS$.}
		\label{fig5}
	\end{figure}
	
	\paragraph{Part (c): assigning job \texorpdfstring{$k$ to $YS$}{k to YS}.}
	If job $k$ is assigned to $YS$, it is the last job in that subset to begin processing. Let $v_3'$ be the value of $v_3$ before this assignment. As in part~(a), $v_3'=v_3-b_k$, and therefore (Figure~\ref{fig6})
	\[
	f_k^a = f_{k-1}^a (u_1, u_2, u_3-a_k, v_1, v_2, v_3-b_k)
	\]
	
	\begin{figure}[ht]
		\centering
		\begin{subfigure}[b]{0.45\textwidth}
			\centering
			\includegraphics[width=\textwidth]{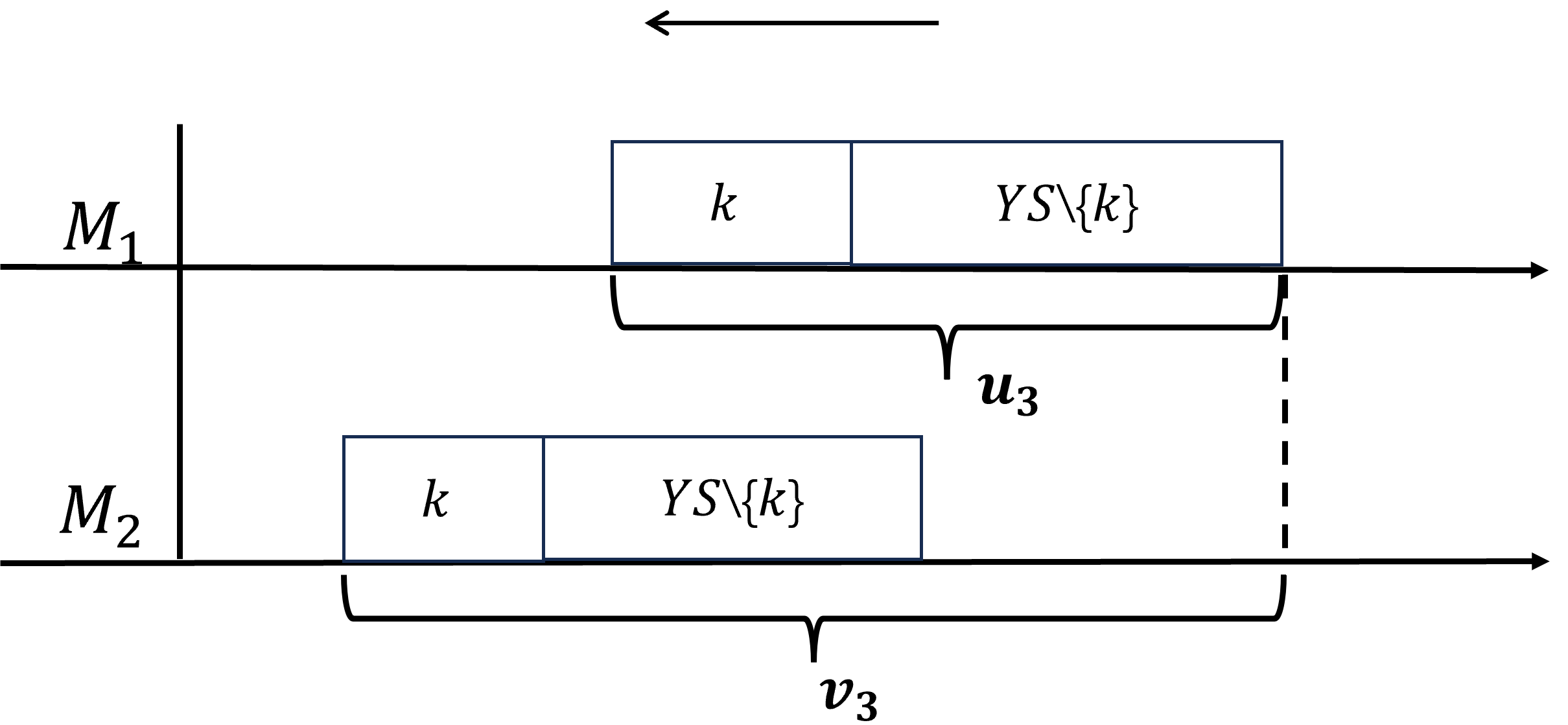}
			\caption{$u_3 + b_k < v_3$}
			\label{fig6:suba}
		\end{subfigure}
		\hfill
		\begin{subfigure}[b]{0.45\textwidth}
			\centering
			\includegraphics[width=\textwidth]{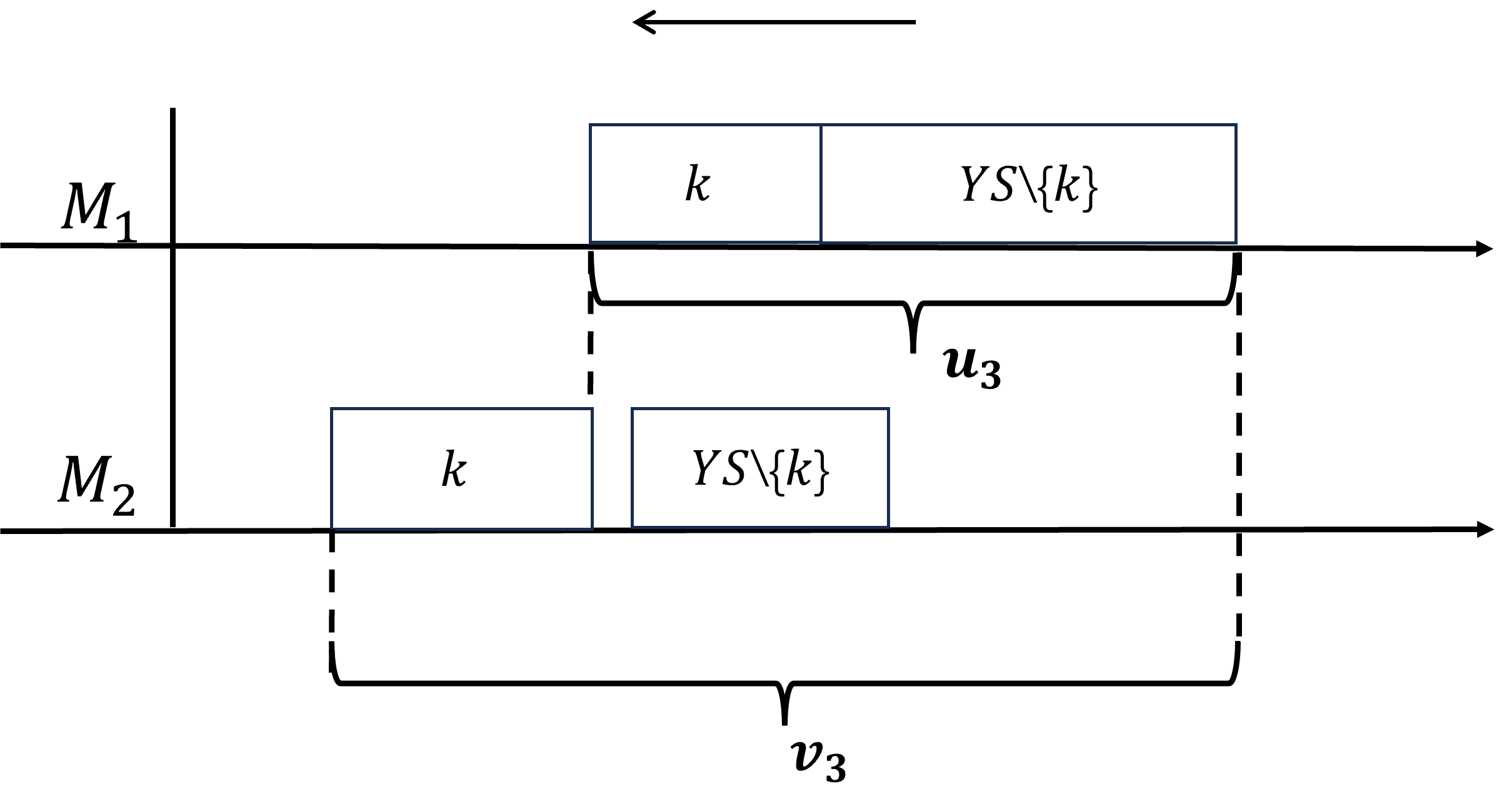}
			\caption{$u_3 + b_k = v_3$}
			\label{fig6:subb}
		\end{subfigure}
		\caption{Assigning job $k$ to $YS$.}
		\label{fig6}
	\end{figure}
	\subsubsection{Case B: \texorpdfstring{\(XS=\emptyset\)}{XS = empty}}
	Case B is handled analogously.
	
	Let $f_k^b(u_1,u_2,u_3,v_1,v_2,v_3)$ be the minimum value of $a(N_k\cap XP)$, evaluated using the original processing times, among all partial assignments represented by the state. We abbreviate this value to $f_k^b$ when the arguments are clear. The state variables are
	\begin{alignat*}{2}
		u_1 &= a(N_k \cap XP) & \; &\text{total processing time on } M_1 \text{ of } N_k \cap XP; \\
		u_2 &= b(N_k \cap YP) & \; &\text{total processing time on } M_2 \text{ of } N_k \cap YP; \\
		u_3 &= a(N_k \cap YS) & \; &\text{total processing time on } M_1 \text{ of } N_k \cap YS; \\
		v_1 &= C_{\max}(N_k \cap XP) & \; &\text{makespan of } N_k \cap XP; \\
		v_2 &= C_{\max}(N_k \cap YP) & \; &\text{makespan of } N_k \cap YP; \\
		v_3 &= C_{\max}(N_k \cap YS) & \; &\text{makespan of } N_k \cap YS.
	\end{alignat*}
	
	Let $h_k = a(N_k) - u_1 - u_3$.
	
	\noindent\textit{Initial conditions.}
	\[
	f_k^b =
	\begin{cases}
		0,
		& \textrm{if} \quad (k, u_1, u_2, u_3, v_1, v_2, v_3) = (0, 0, 0, 0, *, *, *) \\[4pt]
		+\infty,
		& \textrm{otherwise}
	\end{cases}
	\]
	Here $*$ denotes an arbitrary nonnegative value.
	
	\noindent\textit{Recurrence.}
	\[
	f_k^b =
	\min\begin{cases}
		f_{k-1}^b (u_1-a_k,u_2,u_3,v_1-b_k,v_2,v_3) + a_k,
		& \textrm{if} \quad u_1 + b_k \leq v_1, \quad (a) \\[4pt]
		f_{k-1}^b (u_1,u_2-b_k,u_3,v_1,v_2-b_k,v_3),
		& \textrm{if} \quad h_k + b_k \leq v_2, \quad (b) \\[4pt]
		f_{k-1}^b (u_1,u_2,u_3-a_k,v_1,v_2,v_3-b_k),
		& \textrm{if} \quad u_3 + b_k \leq v_3, \quad (c) \\[4pt]
		+\infty,
		& \textrm{otherwise.}
	\end{cases}
	\]
	The ranges are $k\in\{0,1,\ldots,n\}$, $u_1\in[0,s]$, $u_2\in[0,b(N_k)]$, $u_3\in[0,a(N_k)]$, $v_1\in[u_1,V]$, $v_2\in[u_2,V]$, and $v_3\in[u_3,V]$.
	
	Transitions (a), (b), and (c) assign job $k$ to $XP$, $YP$, and $YS$, respectively.
	
	\subsubsection{Makespan}
	
	The dynamic program produces the terminal states $f_n^a$ and $f_n^b$. Let $C_A$ and $C_B$ denote the minimum makespans in Cases A and B, respectively. By Proposition~\ref{prop:makespan-a},
	\[
	\begin{split}
	C_A = \min\limits_{(u_1, u_2, u_3, v_1, v_2, v_3)}\bigl\{
	&\max\{a(N)+d, v_3, e + u_3, v_1 + u_2,\\
	&\qquad u_1 + d + v_2, b(N)\}\mid f_n^a \leq s\bigr\},
	\end{split}
	\]
	where the minimum ranges over all terminal values of $u_1,u_2,u_3,v_1,v_2$, and $v_3$. Similarly, Proposition~\ref{prop:makespan-b} gives
	\[
	\begin{split}
	C_B = \min\limits_{(u_1, u_2, u_3, v_1, v_2, v_3)}\bigl\{
	&\max\{b(N), v_1, v_2 + u_3 + d, u_2 + v_3,\\
	&\qquad a(N) + d, e + u_3\}\mid f_n^b \leq s\bigr\}.
	\end{split}
	\]
	The optimal makespan for $O2\mid r\text{-}a(M_1)\mid C_{\max}$ is therefore
	\[
	\min\{C_A, C_B\}.
	\]
	
	The partition into $XP$, $XS$, $YP$, and $YS$ is recovered by backtracking from a terminal state attaining this minimum. The resulting partition, together with the prescribed Johnson orders, determines the corresponding optimal schedule.
	
	\section{FPTAS}\label{FPTAS}
	
	We now apply standard scaling and rounding to the dynamic program. For brevity, let $P$ denote an instance of $O2\mid r\text{-}a(M_1)\mid C_{\max}$, and let $\mathrm{OPT}$ be its optimal makespan. Define
	\[
	\mathrm{LB}=\max\{a(N),b(N),e,C_{\max}(\widehat{S}^*)\}.
	\]
	\begin{lemma}\label{lem:lower-bound}
		The lower bound $\mathrm{LB}$ satisfies $\mathrm{LB}\geq \mathrm{OPT}/2$.
	\end{lemma}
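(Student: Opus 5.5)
We will prove $\mathrm{OPT}\le 2\,\mathrm{LB}$ by exhibiting one feasible schedule for $P$ whose makespan is at most $2\,\mathrm{LB}$. The natural candidate is obtained from an optimal schedule $\widehat{S}^*$ of the classical problem $O2\Vert C_{\max}$ (computable by Gonzalez--Sahni), with the MP inserted in the resumable sense. Concretely, we keep every operation in the same relative order on each machine. On $M_1$ we shift every piece of processing that falls after time $s$ to the right by $d$. Any operation straddling $s$ is interrupted and resumed at $e$, which is allowed in the resumable model. On $M_2$ we also shift by $d$ every operation that starts at or after $s$ in $\widehat{S}^*$.

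The first step is to check feasibility. Shifting all activity after time $s$ on both machines by $d$ preserves the machine orders and the non-overlap of operations of the same job. The one subtlety is an operation on $M_2$ that is running at time $s$ in $\widehat{S}^*$, whose job may also have its $M_1$ operation after $s$. To avoid case analysis we use a cruder uniform construction. We keep the $M_1$ schedule stretched by the MP: everything in $\widehat{S}^*$ after $s$ is delayed by $d$, which is exactly the same as cutting $\widehat{S}^*$ at time $s$ and inserting an idle gap of length $d$ on both machines. This rigid time-stretch, taking a time $t\ge s$ to $t+d$ with operations spanning $s$ split, keeps all precedences. On $M_2$, however, splitting is not allowed, since preemption is permitted only by the MP on $M_1$. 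We handle this by delaying the straddling $M_2$ operation entirely. We then argue that the resulting makespan is at most $C_{\max}(\widehat{S}^*)+d$ plus possibly one operation length.

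Since even this bound, $C_{\max}(\widehat{S}^*)+d+\max_i b_i$, can exceed $2\,\mathrm{LB}$, we prefer a cleaner schedule. We process all of $\widehat{S}^*$ starting at time $e$, with $M_1$ idle before $s$ and nothing scheduled in $[0,e]$. This schedule is trivially feasible, since the MP lies entirely before it, and its makespan is $e+C_{\max}(\widehat{S}^*)\le \mathrm{LB}+\mathrm{LB}=2\,\mathrm{LB}$. Hence $\mathrm{OPT}\le 2\,\mathrm{LB}$, which is the lemma. It also helps to note that $\mathrm{LB}\le\mathrm{OPT}$, since each term is a valid lower bound. That fact is not needed here but will be used with this lemma to control the scaling factor.

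The only real obstacle is resisting the temptation to stretch $\widehat{S}^*$ around the MP. Starting the whole classical optimal schedule after $e$ sidesteps all feasibility issues. What remains routine is confirming that a schedule shifted by $e$ respects machine availability and job non-overlap, which holds because it is a pure translation of a feasible $O2\Vert C_{\max}$ schedule into $[e,\infty)$, where $M_1$ is always available.
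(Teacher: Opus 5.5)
Your final argument is correct and is exactly the paper's proof: you translate the optimal $O2\Vert C_{\max}$ schedule $\widehat{S}^*$ to start at time $e$, which gives $\mathrm{OPT}\le e+C_{\max}(\widehat{S}^*)\le 2\,\mathrm{LB}$, and you justify feasibility explicitly where the paper leaves it implicit. The earlier discussion of stretching $\widehat{S}^*$ around the MP is never used and can be dropped.
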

	\begin{proof}
		\[
		\mathrm{OPT}
		\leq e+C_{\max}(\widehat{S}^*)
		\leq 2\max\{e,C_{\max}(\widehat{S}^*)\}
		\leq 2\mathrm{LB},
		\]
		and consequently $\mathrm{LB}\geq\mathrm{OPT}/2$.
	\end{proof}
	Let $\mathrm{UB}$ be the makespan returned by the algorithm of Breit et al.~\cite{breit2001two}; then $\mathrm{UB}\leq 4\mathrm{OPT}/3$.
	
	Fix $\varepsilon>0$ and set
	\[
	\delta=\frac{\varepsilon\,\mathrm{LB}}{n+2}.
	\]
	
	The rounded instance $P^0$ is obtained by defining
	\begin{equation}
		d^0 = \lfloor d/\delta \rfloor, \quad a_i^0 = \lfloor a_i/\delta \rfloor, \quad b_i^0 = \lfloor b_i/\delta \rfloor, \quad s^0 = \lfloor s/\delta \rfloor \quad (i = 1,2,\ldots,n). \label{equation2}
	\end{equation}
	Set $e^0=s^0+d^0$. For each rounded state, define $h_k^0=a^0(N_k)-u_1^0-u_3^0$.
	
	We bound every rounded state coordinate by
	\begin{align*}
		\frac{\mathrm{UB}}{\delta}
		\leq \frac{8\mathrm{LB}}{3\delta}
		\leq \left\lceil\frac{8(n+2)}{3\varepsilon}\right\rceil
		=:V^0,
	\end{align*}
	where the first inequality follows from $\mathrm{UB}\leq 8\mathrm{LB}/3$.
	
	Replacing the state coordinates by their rounded counterparts yields the following recurrence for Case A; Case B is modified analogously:
	\[
	f_k^a =
	\min\begin{cases}
		f_{k-1}^a (u_1^0 - a_k^0, u_2^0, u_3^0, v_1^0 - b_k^0, v_2^0, v_3^0) + a_k,
		& \textrm{if} \quad u_1^0 + b_k^0 \leq v_1^0 \quad  \\[4pt]
		f_{k-1}^a (u_1^0, u_2^0 - b_k^0, u_3^0, v_1^0, v_2^0 - b_k^0, v_3^0),
		& \textrm{if} \quad h_k^0 + b_k^0 \leq v_2^0 \quad  \\[4pt]
		f_{k-1}^a (u_1^0, u_2^0, u_3^0 - a_k^0, v_1^0, v_2^0, v_3^0 - b_k^0),
		& \textrm{if} \quad u_3^0 + b_k^0 \leq v_3^0 \quad  \\[4pt]
		+\infty,
		& \textrm{otherwise}
	\end{cases}
	\]
	Here $k\in\{0,1,\ldots,n\}$, $u_1^0\in[0,s^0]$, $u_2^0\in[0,b^0(N_k)]$, $u_3^0\in[0,a^0(N_k)]$, $v_1^0\in[u_1^0,V^0]$, $v_2^0\in[u_2^0,V^0]$, and $v_3^0\in[u_3^0,V^0]$.
	
	Although the state coordinates are rounded, transition~(a) adds the original processing time $a_k$, rather than $a_k^0$, to the value function. Consequently, the condition $f_n\leq s$ remains valid when the rounded schedule is mapped back to the original instance.
	\begin{mdframed}[
		linewidth=1pt,
		leftmargin=10pt,
		rightmargin=10pt
		]
		\textbf{Algorithm FP}
		
		\textbf{Step 0.} Given an instance $P$ and $\varepsilon>0$, index the jobs according to Johnson's rule.
		
		\textbf{Step 1.} Construct the rounded instance $P^0$ using~\eqref{equation2}.
		
		\textbf{Step 2.} Run the rounded dynamic program on $P^0$.
		
		\textbf{Step 3.} Use Propositions~\ref{prop:makespan-a} and~\ref{prop:makespan-b} to evaluate every terminal state ($k=n$). Choose a state of minimum makespan and recover the corresponding schedule $\pi^0$ by backtracking. Denote its makespan in $P^0$ by $C^0_{\max}(\pi^0)$.
		
		\textbf{Step 4.} Apply the order specified by $\pi^0$ to the original instance $P$, scheduling every operation as early as possible. Return the resulting schedule, whose makespan is denoted by $C_{\max}(\pi^0)$.
		
	\end{mdframed}
	
	\begin{theorem}
		Algorithm FP is an FPTAS for $O2\mid r\text{-}a(M_1)\mid C_{\max}$.
	\end{theorem}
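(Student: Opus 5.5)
The proof has two parts: a running-time bound and an approximation bound $C_{\max}(\pi^0)\le(1+\varepsilon)\mathrm{OPT}$, where $\varepsilon$ may need to be replaced by a constant multiple of itself.

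\emph{Running time.} There are $n+1$ values of $k$. The coordinate $u_1^0$ ranges over at most $s^0+1\le V^0+1$ values, and each of the other five coordinates ranges over at most $V^0+1$ values. Each state has three transitions, each evaluated in constant time, and Step~0 together with the computation of $\mathrm{LB}$ and $\mathrm{UB}$ takes $O(n\log n)$ time. The total is $O(n(V^0)^6)=O(n\cdot n^6/\varepsilon^6)=O(n^7/\varepsilon^6)$, which is polynomial in $n$ and $1/\varepsilon$. Terminal evaluation and backtracking are dominated by this.

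\emph{Approximation, lower direction.} Take an optimal canonical schedule $\pi^*$ from Lemmas~\ref{lem:canonical} and~\ref{lem:johnson-orders}. Consider the state that the rounded dynamic program reaches along the partition of $\pi^*$. Its value is $a(XP^*)\le s$, so it satisfies the terminal condition $f_n\le s$. Its rounded coordinates are the flow-shop quantities for the scaled data. Because $\lfloor x/\delta\rfloor\le x/\delta$ termwise, every term in~\eqref{eq2} or~\eqref{eq3} computed in $P^0$ is at most $1/\delta$ times its original value. Hence $C^0_{\max}(\pi^0)\le C^0_{\max}(\pi^*)\le \mathrm{OPT}/\delta$.

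One caveat must be handled first. The DP coordinate $u_1^0\le s^0$ must be admissible, and $a^0(XP^*)\le\lfloor s/\delta\rfloor$ holds because a sum of floors is at most the floor of the sum. Also, the DP computes each $v_i^0$ as the Johnson makespan in JR order of the original indexing. That order may not be Johnson-optimal for the rounded data, but it is fine as long as I evaluate the same fixed order on both sides.

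\emph{Approximation, upper direction.} This is the main obstacle: bounding $C_{\max}(\pi^0)$ by $\delta\, C^0_{\max}(\pi^0)+(n+2)\delta$.

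The key is that the value function certifies $a(XP^0)\le s$ in the original data. Therefore, when the order of $\pi^0$ is applied to $P$, the jobs of $XP^0$ genuinely finish on $M_1$ before $s$, and Proposition~\ref{prop:makespan-a} or~\ref{prop:makespan-b} applies to $P$ with the same partition. For the other blocks I need the induced schedule to still have the canonical block structure. If some job labelled $YS$ actually starts before $s$, or a job labelled $XS$ finishes by $s$, then the true makespan is at most the formula value. I would argue this from the fact that the formulas are upper bounds for the left-shifted schedule with the given sequence.

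Each term in~\eqref{eq2} or~\eqref{eq3} is a flow-shop makespan of a fixed order plus sums of processing times plus at most $e$ or $d$. Any such quantity is a maximum over paths of a sum of processing times, each path using at most $n+1$ operations plus the MP. Since $x\le\delta(\lfloor x/\delta\rfloor+1)$, each term in $P$ is at most $\delta$ times its rounded counterpart plus $(n+2)\delta$. Here $s$ and $d$ contribute two floors, and each path contains at most $n$ jobs' operations, because a critical path in a two-machine flow shop uses each job once except the switching job. I would count this carefully, possibly enlarging $\delta$'s denominator by a constant. Combining the two directions gives
\[
C_{\max}(\pi^0)\le \delta C^0_{\max}(\pi^0)+(n+2)\delta\le \mathrm{OPT}+\varepsilon\,\mathrm{LB}\le(1+\varepsilon)\mathrm{OPT},
\]
where the last step uses $\mathrm{LB}\le\mathrm{OPT}$, since $a(N)$, $b(N)$, $e$, and $C_{\max}(\widehat{S}^*)$ are all lower bounds on $\mathrm{OPT}$.
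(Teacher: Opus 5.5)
Your proposal is correct and follows the same route as the paper. It uses the chain $C_{\max}(\pi^0)\le\delta\bigl(C^0_{\max}(\pi^0)+n+2\bigr)\le\delta\bigl(C^0_{\max}(\pi^*)+n+2\bigr)\le\mathrm{OPT}+\varepsilon\,\mathrm{LB}\le(1+\varepsilon)\mathrm{OPT}$, together with the $O(n(V^0)^6)=O(n^7/\varepsilon^6)$ state count.

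You are more careful than the paper on three points it glosses over. These are the admissibility of $\pi^*$'s state via $a^0(XP^*)\le s^0$ and $f_n=a(XP^*)\le s$, the fact that $\pi^0$ is optimal only among DP candidates in the original Johnson order, and the formulas remaining upper bounds for the left-shifted schedule. Your hedge about enlarging $\delta$'s denominator is unnecessary: the two floors from $e=s+d$ occur only in the term $e+a(YS)$, which involves at most $n$ operations, so $n+2$ suffices.
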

	
	\begin{proof}
		Let $P'$ be the instance obtained from $P$ by scaling every time parameter by $1/\delta$; that is,
		
		\begin{equation}
			d'=d/\delta,\quad s'=s/\delta,\quad
			a_i'=a_i/\delta,\quad b_i'=b_i/\delta
			\quad (i=1,2,\ldots,n).\label{equation3}
		\end{equation}
		
		Let $\pi^*$ be an optimal schedule for $P$.
		
		Let $C_{\max}'(\pi^0)$ denote the makespan obtained by applying $\pi^0$ to $P'$. Exact scaling gives
		\begin{equation}
			C_{\max}(\pi^0)=\delta C_{\max}'(\pi^0). \label{eq4}
		\end{equation}
		
		Each rounded time parameter differs from its scaled counterpart by less than one. Since a critical completion-time expression contains at most $n+1$ operation lengths and one unavailability interval,
		\begin{equation}
			C_{\max}'(\pi^0)\leq C_{\max}^0(\pi^0)+n+2. \label{eq5}
		\end{equation}
		
		Let $C_{\max}^0(\pi^*)$ be the makespan obtained by applying $\pi^*$ to $P^0$. Since $\pi^0$ is optimal for $P^0$,
		\begin{equation}
			C_{\max}^0(\pi^0)\leq C_{\max}^0(\pi^*). \label{eq6}
		\end{equation}
		
		Finally, rounding down cannot increase the makespan, so
		\begin{equation}
			C_{\max}^0(\pi^*)\leq C_{\max}(\pi^*)/\delta. \label{eq7}
		\end{equation}
		
		Combining~\eqref{eq4}--\eqref{eq7} yields
		\begin{align*}
			C_{\max}(\pi^0)
			&=\delta C_{\max}'(\pi^0) && \text{by~\eqref{eq4}}\\
			&\leq\delta\bigl(C_{\max}^0(\pi^0)+n+2\bigr) && \text{by~\eqref{eq5}}\\
			&\leq\delta\bigl(C_{\max}^0(\pi^*)+n+2\bigr) && \text{by~\eqref{eq6}}\\
			&\leq C_{\max}(\pi^*)+(n+2)\delta && \text{by~\eqref{eq7}}\\
			&=C_{\max}(\pi^*)+\varepsilon\,\mathrm{LB}\\
			&\leq(1+\varepsilon)C_{\max}(\pi^*).
		\end{align*}
		
		The rounded dynamic program has $O(n(V^0)^6)$ states. Since $V^0=\lceil 8(n+2)/(3\varepsilon)\rceil$, its running time is $O(n^7/\varepsilon^6)$, which is polynomial in both $n$ and $1/\varepsilon$.
	\end{proof}
	
	\section{Conclusion}\label{Conclusion}
	We studied two-machine open-shop scheduling with a single fixed MP in the resumable setting. A four-way partition of the jobs allows the makespan to be recovered from a compact collection of flow-shop states. This structure yields a seven-dimensional pseudo-polynomial dynamic program, improving on the previous ten-dimensional formulation, and leads to the first FPTAS for $O2\mid r\text{-}a(M_1)\mid C_{\max}$. An interesting direction for future work is to determine whether the corresponding non-resumable problem also admits an FPTAS.

\backmatter

\bmhead{Acknowledgements}

This work was supported by the National Natural Science Foundation of China
(Grant Nos. 12471338, 12461011, and 12371363) and the Liaoning Revitalization
Talents Program (Grant No. XLYC2403125).


\begingroup
\setlength{\bibsep}{0.8em}
\bibliography{algorithmica-article}
\endgroup

\end{document}